\newcommand{\PSPACE}{\mathrm{PSPACE}}
\newcommand{\NL}{\mathrm{NL}}
\newcommand{\ZPP}{\mathrm{ZPP}}
\newcommand{\AM}{\mathrm{AM}}
\newcommand{\MA}{\mathrm{MA}}
\newcommand{\IP}{\mathrm{IP}}
\newcommand{\NP}{\mathrm{NP}}
\newcommand{\BPP}{\mathrm{BPP}}
\newcommand{\RNC}{\mathrm{RNC}}
\newcommand{\coNP}{\mathrm{coNP}}
\newcommand{\coNL}{\mathrm{coNL}}
\newcommand{\coMA}{\mathrm{coMA}}
\newcommand{\coAM}{\mathrm{coAM}}
\newcommand{\coIP}{\mathrm{coIP}}
\newcommand{\psd}{\mathrm{psd}}
\theoremstyle{plain}
\newtheorem{lemma}{Lemma}[section]
\newtheorem*{lemma*}{Lemma}
\newtheorem{theorem}[lemma]{Theorem}
\newtheorem{corollary}[lemma]{Corollary}
\newtheorem*{corollary*}{Corollary}
\theoremstyle{definition}
\newtheorem{definition}[lemma]{Definition}
\author{
	Shafi Goldwasser%\thanks{}
	\and
	Ofer Grossman
	\and
	Dhiraj Holden
}
\begin{document}

\title{Pseudo-Deterministic Proofs}

\maketitle

\begin{abstract}

We introduce {\it pseudo-deterministic interactive proofs} ($\psd\AM$): interactive proof systems for search problems where
the verifier is guaranteed with high probability to output the same output on different executions.
As in the case with classical interactive proofs,
the verifier is a probabilistic polynomial time algorithm interacting with an untrusted powerful prover.

We view pseudo-deterministic interactive proofs as an extension of the study of pseudo-deterministic randomized polynomial time algorithms:  
 the goal  of the latter  is to {\it  find} canonical solutions to search problems whereas the goal of the former is to 
{\it prove} that a solution to a search problem is canonical to a probabilistic polynomial time verifier. 
Alternatively, one may think of the powerful prover as aiding the probabilistic polynomial time verifier to find canonical solutions to search problems, 
with high probability over the randomness of the verifier.  The challenge is that pseudo-determinism should hold not only with respect to the randomness, but also with respect to the prover: a malicious prover should not be able to cause the verifier to output a solution other than the unique canonical one.

We show the following results:
\begin{itemize}
\item
A natural and illustrative example of a search problem in $\psd\AM$ is the language where given two isomorphic graphs $(G_0,G_1)$, the goal is to find an isomorphism $\phi$ from $G_0$ to $G_1$.
We will show a constant round interactive proof where on every pair of input graphs $(G_0,G_1)$, the verifier with high probability will output 
a unique isomorphism $\phi$ from $G_0$ to $G_1$, although many isomorphisms may exist. 
\item
In contrast, we show that it is unlikely that $\psd\AM$ proofs with constant rounds exist for NP-complete problems by showing that if any $\NP$-complete problem has an $\psd\AM$ protocol where the verifier outputs a unique witness with high probability, then the polynomial hierarchy collapses. 
\item
We show that for every problem in search-$\BPP$, there exists a pseudo-deterministic $\MA$ protocol which succeeds on infinitely many input lengths, where the verifier takes subexponential time. 
\item 
Finally, we consider non-deterministic  log-space NL algorithms with canonical outputs, which we name {\it pseudo-deterministic NL}: on every input, for every non-deterministic choice of the algorithm, either the algorithm rejects or it outputs a canonical unique output.  We show that every search problem in search-$\NL$ (solvable by a nondeterministic log-space algorithm), is in pseudo-deterministic $\NL$.
\item
We show that the class of pseudo-deterministic $\AM$ protocols equals the class of problems solvable by polynomial time search algorithms with oracle access to promise-$\AM \cap \coAM$, where queries to the oracle must be in the promise. We show similar results for pseudo-deterministic $\NP$ and pseudo-deterministic $\MA$.
\end{itemize}
\end{abstract}

\section{Introduction}
In \cite{GG11}, Gat and Goldwasser initiated the study of probabilistic (polynomial-time) search algorithms
that, with high probability, output the same solution on different executions. That is, for all inputs $x$, the randomized algorithm $A$ satisfies $Pr_{r_1, r_2} (A(x, r_1) = A(x, r_2)) \ge 1 - 1/poly(n)$.

Another way of viewing such algorithms is that for a fixed binary relation $R$,
for every $x$ the algorithm associates a canonical solution $s(x)$ satisfying $(x, s(x)) \in R$, and on input $x$ the algorithm outputs $s(x)$ with
overwhelmingly high probability. Algorithms that satisfy this condition are called {\it pseudo-deterministic}, because they essentially offer the same functionality as deterministic algorithms; that is, they produce a 
canonical output for each possible input (except with small error probability)\footnote{In fact, by amplifying the success probability, one can ensure that as black boxes, pseudo-deterministic algorithms are indistinguishable from deterministic algorithms by a polynomial time machine.}. In contrast,
arbitrary probabilistic algorithms that solve search problems may output different solutions when
presented with the same input (but using different internal coin tosses); that is, on input $x$, the output may arbitrarily distributed among all valid solutions for
$x$ (e.g. it may be uniformly distributed).

Several pseudo-deterministic algorithms have been found which improve (sometimes significantly) on
the corresponding best known deterministic algorithm. This is the case for finding quadratic non-residues modulo primes, generators for certain cyclic groups, non-zeros of 
multi-variate polynomials, matchings in bipartite graphs in RNC, and sub-linear algorithms for several problems
\cite{GGR,matching,GG11,roots}.
For other problems, such as finding unique primes of a given length, pseudo-deterministic algorithms remain elusive (for the case of primes, it has been shown that there exists a subexponential time pseudo-deterministic algorithm which works on infinitely many input sizes \cite{OS}). 
Many general questions come to mind, the most significant being: Do polynomial-time pseudo-deterministic algorithms exist for all
search problems that can be solved in probabilistic polynomial time? %or going to the other extreme, do
%polynomial-time pseudo-deterministic algorithms exist only for search problems that can be solved
%in deterministic polynomial-time. %I commented this out because this question is equivalent to P vs BPP

In this work we extend the study of pseudo-determinism in the context of probabilistic algorithms to the context of interactive proofs  and non-determinism.
We view  pseudo-deterministic  interactive proofs as a natural 
extension  pseudo-deterministic randomized polynomial time algorithms:  
 the goal  of the latter  is to {\it  find} canonical solutions to search problems whereas the goal of the former is to 
{\it prove} that a solution to a search problem is canonical to a probabilistic polynomial time verifier.

\subsection{Our Contribution}

Consider the search problem of finding a large clique in a graph.   
A  nondeterministic efficient algorithm for this problem exists: simply guess a set of vertices $C$, confirm in polynomial time
that the set of vertices forms a clique, and either  output $C$  or reject if $C$ is not a clique.
Interestingly, in addition to being nondeterministic, there is another feature of this algorithm; on the same input there may be many possible solutions to the search problem and any one of them may be produced as output. 
Namely, on different executions of the algorithm, on the same input graph $G$,  one execution may guess clique $C$ and another execution may guess clique $C' \neq C$, and  both are valid accepting executions. 
The Satisfiability (SAT) problem  is another  example for which the standard non-deterministic algorithm may more than one output when one exists, i.e., there may be more than one accepting path for the nondeterministic machine and different paths result in different satisfying assignments.

A natural question is whether for each satisfiable formula, there exists a unique canonical satisfying assignment
which can be verified in  polynomial time: that is, can the verifier $V$ be convinced both of the satisfiability of the formula and that the satisfying assignment given to $V$ is canonical?
We note that natural candidates which come to mind, such as the lexicographically first satisfying assignment, are not known to be verifiable in polynomial time (but seem to require the power of $\Sigma_2$ computation).
Furthermore, the work of Hemaspaandra et al \cite{HNOS}  implies the collapse of the polynomial time hierarchy if for every SAT formula there exists a unique  assignment which is polynomial time verifiable in a one round deterministic protocol.

\noindent{\bf Pseudo-determinism for Interactive Proofs and Non-Deterministic Computations:}
In this paper, we consider the setting of interactive proofs for {\it search problems}. 
Such interactive proofs consist of a a pair of interacting algorithms:   a probabilistic polynomial time verifier and 
a computationally unbounded prover which on a common input $x$ engage in rounds of interaction at the end of which
the verifier outputs $y$ - a {\it solution} for the search problem on $x$. 
Analogously to the case of interactive proofs for languages, we require that for every $x$, there exists an honest prover which outputs a correct solution when one exists and for all dishonest provers the probability that the verifier will accept an incorrect solution is small.
We are interested in an additional feature:
the verifier is guaranteed with high probability over its randomness to accept a canonical (unique) output solution and otherwise reject. Importantly, a dishonest prover may not cause the verifier to output a solution other than the canonical (unique) one (except with very low probability).

One may think of the powerful prover as aiding the probabilistic polynomial time verifier to find canonical solutions to search problems, 
with high probability over the randomness of the verifier.  The challenge is that pseudo-determinism should hold not only with respect to the randomness, but also with respect to the prover: a malicious prover should not be able to cause the verifier to output a solution other than the  canonical unique one.
In addition to the intrinsic complexity theoretic interest in this problem,  \textit{consistency} or \textit{predictability} of  
different executions on the same input are natural requirements from protocols.  

To formally study this problem, we define {\it pseudo-deterministic IP} ($\psd\IP$) to be the class of
search problems $R$ (relation on inputs and solutions)  for which  there exists a  probabilistic polynomial time verifier 
 for which for every $x \in R_L$, there is a good powerful prover that will convince the verifier to output with high probability a 
unique witness $s(x)$ (referred to as the ``canonical'' witness) such that
$(x,s(x)) \in R$; and for every $x$ not in $R_L$ (the set of $x$ such that there does not exist a $y$ satisfying $(x, y) \in R$), for all provers  the verifier will reject  with high probability. 
Furthermore, for all provers , the probability that
on $x \in R_L$, the verifier will output any witness $y$ other than the ``canonical'' $s(x)$ is small. 
We let $\psd\AM$ refer to those interactive proofs in which a constant number of rounds is used.

We remark that proving uniqueness of a witness  for any $\NP$ problem can be done in by an interactive proof with a $\PSPACE$ prover --
the prover can convince the verifier that a witness provided  is a lexicographically smallest witness -- using an Arthur-Merlin proof which  takes a {\it polynomial number} of rounds of interaction between prover and verifier following
the celebrated Sum-Check protocol.  
The interesting question to ask is: do {\bf constant-round}  pseudo-deterministic interactive proofs exist for
hard problems in $\NP$ for which many witnesses exist?\\

%For example,  can the verifier find a \textit{unique} satisfying assignment to some SAT instance with the aid of a prover in a constant number of rounds of interaction?
%Where does the new class of pseudo-deterministic $\AM$ sit in relation to $\AM$, other classes in the polynomial time hierarchy, and the subclasses of TFNP?
%Can we fully characterize the complexity of problems in constant round $\psd\IP$? Is there a difference in the complexity of the prover
%when the verifier uses public or private coins? What power should the prover possess to prove uniqueness of witnesses in $\Lt$ relative to the complexity of $\Lt$?

\noindent{\bf Our Results}

\begin{itemize}
\item
We show that there exists a pseudo-deterministic constant-round Arthur-Merlin protocol for finding an isomorphism between two given graphs. 

Recall that the first protocol showing graph non-isomorphism is in constant round $\IP$ was shown by \cite{GMW} and later shown to be possible using public coins via the general transformation of private to public coins \cite{GS}. Our algorithm finds a unique isomorphism by producing the lexicographically first isomorphism. 
In order to prove that a particular isomorphism between input graph pairs is lexicographically smallest, 
the prover will prove in a sequence of sub-protocols to the verifier that a sequence of graphs suitably defined are non-isomorphic.   In an alternative construction, we exhibit an interactive protocol that computes the automorphism group of a graph in a verifiable fashion.
\item
We prove that if any $\NP$-complete problem has a a pseudo-deterministic constant round $\AM$ protocol, then, $\NP \subseteq \coNP/poly$ and the polynomial hierarchy collapses to the third level,
showing that it is unlikely that NP complete problems have pseudo-deterministic constant round $\AM$ protocols.
This result extends the work of \cite{HNOS} which shows that if there are polynomial time unique verifiable proofs for SAT, then the polynomial hierarchy collapses. Essentially, their result held for deterministic interactive proofs (i.e., NP), and we extend their result to probabilistic interactive proofs with constant number of rounds (i.e., AM). 
\item
For every problem in search-$\BPP$, there exists a pseudo-deterministic $\MA$ protocol where the verifier takes subexponential time on infinitely many input lengths.

The idea of the result is to use known circuit lower bounds to get pseudo-deterministic
subexponential time MA protocols for problems in search-BPP for infinitely many input lengths.
We remark that recently Oliveira and Santhanam \cite{OS} showed 
a subexponential time pseudo-deterministic algorithm for infinitely many input lengths for all properties which have inverse polynomial
density, and are testable in polynomial time.
In their construction, the condition of high density is required. In the case of MA, unconditional
circuit lower bounds for $\MA$ with a verifier which runs in exponential time have been shown by Miltersen et al \cite{MVW99}, which means that inverse polynomial density
is no longer required. Hence, we can obtain a pseudo-deterministic MA algorithm  from circuit
lower bounds. Thus, compared to \cite{OS},  our result shows a  pseudo-derandomization (for a subexponential verifier and infinitely many input sizes $n$)
for all problems in search-BPP (and not just those with high density), but requires a prover.
\item
For every search problem in search-$\NL$, there exists a pseudo-deterministic $\NL$ protocol.

We define
{\it pseudo-deterministic NL} to be the class of search problems $R$ (a relation on inputs and solutions) for which there exists log-space  non-deterministic algorithm $M$ (Turing machines) such that for every input $x$, there exists a unique $s(x)$ such that $ R(x,s(x))=1$ and $M(x)$ outputs $s(x)$ or rejects $x$. Namely, there are no two accepting paths for input $x$ that result in different outputs.

\item
We show structural results regarding pseudo-deterministic interactive proofs. Specifically, we show that $\psd\AM$ equals to the class $\mathrm{search-}\mathrm{P}^{\mathrm{promise-}(\AM \cap \coAM)}$, where for valid inputs $x$, all queries to the oracle must be in the promise. We show similar results in the case of pseudo-deterministic $\MA$ and pseudo-deterministic $\NP$.
\end{itemize}

\subsection{Other Related Work}

In their seminal paper on NP with unique solutions, Valiant and Vazirani asked the following question:
is the inherent intractability of NP-complete problems caused by the fact that NP-complete problems have
many solutions? They show this is not the case by exhibiting a problem -- SAT with unique solutions-- which is NP-hard under randomized reductions. They then showed how their result enables to show the NP-hardness
under randomized reductions for a few related problems such as parity-SAT.
We point out that our question is different. We are not restricting our study to problems (e.g.
satisfiable formulas)  with unique solutions. Rather, we consider hard problems for which there may be exponentially many solutions, and ask if one can  focus on one of them and verify it in polynomial time.
In the language of satisfiability, $\phi$ can be any  satisfiable formula with exponentially many satisfying assignments; set $s(\phi)$ to be a unique valued function which outputs a satisfying assignment for $\phi$. We study whether there exists an $s$ which can be efficiently computed, or which has an efficient interactive proof.

The question of computing canonical labellings of graphs was considered by Babai and Luks \cite{BL}  in the early eighties.
Clearly graph isomorphism is polynomial time reducible to computing canonical labellings of graphs (compute the canonical labeling
for your graphs and compare), however it is unknown whether  the two problems are equivalent (although finding
canonical labellings in polynomial time seems to be known for all classes of
graphs for which isomorphism can be computed in polynomial time).  The problem of computing a set of generators (of size $O(\log n)$) of the automorphism
group of a graph $G$  was  shown by Mathon \cite{M79} (among other results) to be polynomial-time reducible to the problem of computing the isomorphism of a graph. We use this in our proof that graph isomorphism is in $\psd\AM$.

Finally, we mention that recently another notion of uniqueness has been studied in the context of
interactive proofs by Reingold et al \cite{RRR},  called  {\it unambiguous interactive proofs} where the
prover has a unique successful strategy. This again differs from pseudo-deterministic interactive proofs, in that we don't assume (nor guarantee)
a unique strategy by the successful
prover, we only require that the prover proves that the solution (or witness) the verifier receives is unique (with high probability).

\begin{section}{Definitions}

In this section, we define pseudo-determinism in the context of nondeterminism and interactive proofs. We begin by defining a search problem. Intuitively speaking, a search problem is a problem where for an input, there may be multiple possible outputs.

\begin{definition}[Search Problem]
A \textit{search problem} is a relation $R$ consisting of pairs $(x, y)$. We define $L_R$ to be the set of $x$'s such that there exists a $y$ satisfying $(x, y) \in R$. An algorithm solving the search problem is an algorithm that, when given $x \in L_R$, finds a $y$ such that $(x, y) \in R$. When $L_R$ contains all strings, we say that $R$ is a \textit{total} search problem. Otherwise, we say $R$ is a \textit{promise} search problem.
\end{definition}

We now define pseudo-determinism in the context of interactive proofs. Intuitively speaking, we say that an interactive proof is pseudo-deterministic if an honest prover causes the verifier to output the same unique solution with high probability, and dishonest provers can only cause the verifier to output either the unique solution or $\bot$ with high probability. In other words, dishonest provers cannot cause the verifier to output an answer which is not the unique answer. We note that we use $\psd\IP$, $\psd\AM$, $\psd\NP$, $\psd\MA$, and so on, to refer to a class of promise problems, unless otherwise stated.

\begin{definition}[Pseudo-deterministic $\IP$]
A search problem $R$ is in \textit{pseudo-deterministic} $\IP$ (often denoted $\psd\IP$) if there exists an interactive protocol between a probabilistic polynomial time verifier algorithm $V$ and
a prover (unbounded algorithm) $P$ such that for every $x \in L_R$, there exists a $s(x)$ satisfying $(x, s(x)) \in R$ and:
\begin{enumerate}
\item
There exists a $P$ such that $\Pr_r[(P,V)(x,r)=s(x)] \ge {2\over 3}$.
\item
For all $P'$, $\Pr_r[(P',V)(x,r)=s(x)$ or $\bot]\ge{2\over 3}$.
\end{enumerate}
And for every $x \notin L_R$, for all provers $P'$, $\Pr_r[(P',V)(x,r) \neq \bot] \le {1\over 3}$.
\end{definition}

One can similarly define pseudo-deterministic $\MA$, and pseudo-deterministic $\AM$, where $\MA$ is a 1-round protocol, and $\AM$ is a 2-round protocol. One can show that any constant-round interactive protocol can be reduced to a 2-round interactive protocol \cite{babai}. Hence, the definition of pseudo-deterministic $\AM$ captures the set of all search problems solvable in a constant number of rounds of interaction. Furthermore, in the definition of pseudo-deterministic $\AM$, we use public coins. One can show that any protocol using private coins can be simulated using public coins\footnote{In the case where the prover is not unbounded, private coins may be more powerful than public coins.}\cite{GS}.

\begin{definition}[Pseudo-deterministic $\AM$]
A search problem $R$ is in \textit{pseudo-deterministic }$\AM$ (often denoted $\psd\AM$) if there exists a probabilistic polynomial time verifier algorithm $V$, polynomials $p$ and $q$, and
for every $x \in L_R$, there exists an $s(x)$ of polynomial size satisfying $(x, s(x)) \in R$ and:
\begin{enumerate}
\item
$\Pr \nolimits _{r\in \{0,1\}^{p(n)}}(\exists z\in \{0,1\}^{q(n)}\,V(x,r,z)=s(x))\geq {2 \over 3}$
\item
$\Pr \nolimits _{r\in \{0,1\}^{p(n)}}(\forall z\in \{0,1\}^{q(n)}\,V(x,r,z) \in \{s(x), \bot\})\geq {2 \over 3}$.
\end{enumerate}
And for every $x \notin L_R$, we have $\Pr \nolimits _{r\in \{0,1\}^{p(n)}}(\forall z\in \{0,1\}^{q(n)}\,V(x,r,z) = \{\bot\})\geq {2 \over 3}$.
\end{definition}
\end{section}

\begin{definition}[Pseudo-deterministic $\MA$]
A search problem $R$ is in  \textit{pseudo-deterministic }$\MA$ if there is a probabilistic polynomial time verifier $V$ such that
for every $x \in L_R$, there exists an $s(x)$ of polynomial size satisfying\footnote{We remark that we use $M$ to denote the proof sent by the prover Merlin, and not the algorithm implemented by the prover.}:
\begin{enumerate}
\item
There exists a message $M$ of polynomial size such that $\Pr_r[V(x,M,r)=s(x)] \ge {2\over 3}$.
\item
For all $M'$, $\Pr_r[V(x,M',r)=s(x)$ or $\bot ]>{2\over 3}$.
\end{enumerate}
And for every $x \notin L_R$, for all $M'$, $\Pr_r[V(x,M',r) \neq \bot] \le {1\over 3}$.
\end{definition}

Pseudo-determinism can similarly be defined in the context of nondeterminism (which can be viewed as a specific case of an interactive proof):

\begin{definition}[Pseudo-deterministic $\NP$]
A search problem $R$ is in  \textit{pseudo-deterministic} $\NP$ if there is a deterministic polynomial time verifier $V$ such that
for every $x \in L_R$, there exists an $s(x)$ of polynomial size satisfying $(x, s(x)) \in R$ and:
\begin{enumerate}
\item
There exists a message $M$ of polynomial size such that $V(x,M)=s(x)$.
\item
For all $M'$, $V(x,M')=s(x)$ or $V(x,M')=\bot$.
\end{enumerate}
And for every $x \notin L_R$, for all $M'$, we have $V(x,M',r) = \bot$.
\end{definition}

One can also consider the following alternate equivalent definition for pseudo-deterministic $\NP$. On an input $x \in L_R$, we require that for all nondeterministic choices of the machine, either the machine outputs $\bot$ or $s(x)$. Furthermore, we require that there exist a sequence of nondeterministic choices such that the output is $s(x)$. In the case where $x \notin L_R$, we require that for all nondeterministic choices of the machine, the output is $\bot$.

Another alternate equivalent definition is that a search problem is in $\NP$ if it can be solved by an $\NP$ machine which has a unique accept configuration (i.e., the set of possible configurations of the $\NP$ machine and its tape when reaching an accept state has a single element).

We now define pseudo-deterministic $\NL$:

\begin{definition}[Pseudo-deterministic $\NL$]
A search problem $R$ is in  \textit{pseudo-deterministic} $\NL$ if there is a nondeterministic log-space machine $V$ such that
for every $x \in L_R$, there exists an $s(x)$ of polynomial size satisfying $(x, s(x)) \in R$ and:
\begin{enumerate}
\item
There exist nondeterministic choices $N$ for the machine such that such that $V(x, N)=s(x)$.
\item
For all possible nondeterministic choices $N'$, $V(x,N')=s(x)$ or $V(x,N')=\bot$.
\end{enumerate}
And for every $x \notin L_R$, for all nondeterministic choices $N'$, $V(x,N') = \bot$.
\end{definition}

It is valuable to contrast the above definition with the definition of search-$\NL$, in which it is possible to have different nondeterministic guesses of the machine result in different answers:
\begin{definition}[search-$\NL$]
A search problem $R$ is in  \textit{search}-$\NL$ if there is a nondeterministic log-space machine $V$ such that
for every $x \in L_R$,
\begin{enumerate}
\item
There exist nondeterministic choices $N$ for the machine such that such that $V(x, N)=y$, and $(x, y) \in R$.
\item
For all possible nondeterministic choices $N'$, $(x, V(x,N')) \in R$, or $V(x,N')=\bot$.
\end{enumerate}
And for every $x \notin L_R$, for all nondeterministic choices $N'$, $V(x,N') = \bot$.
\end{definition}
Intuitively speaking, in the case of search-$\NL$, it is okay for the algorithm to output different correct solutions when using different nondeterministic choices. In the case of pseudo-deterministic-$\NL$, there should not be two nondeterministic choices for the algorithm which result in different answers (on input $x$, every set of nondeterministic choices which leads to an answer which is not $\bot$ should lead to the same answer $s(x)$).

We will also use the following definition of search-$\BPP$, the class of search problem solvable (and verifiable) in probabilistic polynomial time:

\begin{definition}[Search-$\BPP$]
A binary relation $R$ is in \textit{search-}$BPP$ if
\begin{enumerate}
\item
There is a probabilistic polynomial time algorithm $A$ that given $x \in R_L$, outputs a $y$ such that with probability at least $2/3$, $(x, y) \in R$.
\item
There is a probabilistic polynomial time machine $B$ such if $y$ is output by $A$ when run on $x$, and $(x, y) \notin R$, then $B$ rejects on $(x, y)$ with probability at least $2/3$. Furthermore, with probability at least 1/2, $B$ accepts on $(x, y)$ with probability at least 1/2.
\end{enumerate}
When $x \notin R_L$, $A$ outputs $\bot$ with probability at least $2/3$.
\end{definition}

The intuition of the above definition is that $A$ is used to find an output $y$, and then $B$ can be used to verify $y$, and amplify the success probability.

\begin{section}{Pseudo-deterministic-$\AM$ algorithm for graph isomorphism}\label{sec:GI}

In this section, we show an interactive protocol for graph isomorphism that produces a unique isomorphism with high probability. The method we use to do this involves finding the lexicographically first isomorphism using group theory. In particular, the verifier will obtain the automorphism group of one of the graphs from the prover and verify that it is indeed the automorphism group, and then the verifier will convert an isomorphism obtained from the prover into the lexicographically first isomorphism between the two graphs. We will define the group-theoretic terms used below.

We present an alternate proof of the same result in the appendix (the proof in the appendix is more combinatorial, whereas the proof below is more group theoretic).

\begin{definition}[Automorphism Group]
The \textit{automorphism group} $Aut(G)$ of a graph is the set of permutations $\phi : G \rightarrow G$ such that for every $u,v \in V(G)$, $(u,v) \in E(G) \iff (\phi(u),\phi(v)) \in E(G)$ (i.e., $\phi$ is an automorphism of $G$). 
\end{definition}

\begin{definition}[Stabilize]
Given a set $S$ and elements $\alpha_1, \alpha_2,...,\alpha_i \in S$, we say that a permutation $\phi: S \rightarrow S$ \textit{stabilizes} $\{\alpha_1,\alpha_2,...,\alpha_k \}$ iff $\phi(\alpha_i) = \alpha_i$ for $i \in \{1,...,k\}$. We also say that a group $G$ \textit{stabilizes} $\{\alpha_1,\alpha_2,...,\alpha_k \}$ when every $\phi \in G$ stabilizes $\{\alpha_1,\alpha_2,...,\alpha_k \}$.
\end{definition}

\begin{definition}[Stabilizer]
The \textit{stabilizer} of an element $s$ in $S$ for a group $G$ acting on $S$ is the set of elements of $G$ that stabilize $s$.
\end{definition}

\begin{lemma}\label{uniquefromautomorphism}
Suppose that we are given a tuple $(G_1, G_2, H, \phi)$ where $G_1$ and $G_2$ are graphs, $H = Aut(G_1)$ is represented as a set of generators, and $\phi$ an isomorphism between $G_1$ and $G_2$. Then, in polynomial time, we can compute a unique isomorphism $\phi^*$ from $G_1$ to $G_2$ independent of the choice of $\phi$ and the representation of $H$.
\end{lemma}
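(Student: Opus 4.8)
The plan is to identify the set of all isomorphisms $G_1\to G_2$ with a coset of $H=Aut(G_1)$ and then take $\phi^*$ to be its lexicographically smallest element, which we compute using standard permutation-group machinery. Since $\phi$ is an isomorphism $G_1\to G_2$, any isomorphism $\psi\colon G_1\to G_2$ satisfies $\phi^{-1}\psi\in Aut(G_1)$, and conversely $\phi\alpha$ is an isomorphism for every $\alpha\in Aut(G_1)$; hence the set $\mathrm{Iso}(G_1,G_2)$ of all isomorphisms from $G_1$ to $G_2$ equals the left coset $\phi H$ inside the symmetric group on $V(G_1)$. By hypothesis $H$ is the \emph{full} automorphism group, so this coset depends only on $G_1$ and $G_2$ — not on the particular $\phi$ we were handed, nor on which generating set presents $H$. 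Fixing the ordering $v_1<\dots<v_n$ on $V(G_1)$ and viewing a permutation $\psi$ as the string $(\psi(v_1),\dots,\psi(v_n))$, I define $\phi^*$ to be the lexicographically least element of $\mathrm{Iso}(G_1,G_2)=\phi H$. This is a well-defined isomorphism (the coset is nonempty), and it is manifestly independent of $\phi$ and of the representation of $H$, so the entire content of the lemma is to exhibit a polynomial-time algorithm computing it.

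I would first run Schreier--Sims (Furst--Hopcroft--Luks) on the given generators of $H$ to obtain, in polynomial time, a strong generating set relative to the point-stabilizer chain $H=H^{(0)}\ge H^{(1)}\ge\dots\ge H^{(n)}=\{e\}$, where $H^{(i)}$ is the pointwise stabilizer of $v_1,\dots,v_i$, together with the level orbits and transversal elements. Then I build $\phi^*$ greedily, maintaining a running product $w\in H$ of chosen transversal elements, initially $w=e$. At step $i$ I compute the orbit of $v_i$ under $H^{(i-1)}$ (available from the strong generating set), evaluate $\phi(w(c))$ for each point $c$ in this orbit, pick the point $b_i$ minimizing this value, and update $w\leftarrow w\,u_i$ where $u_i\in H^{(i-1)}$ is the transversal element with $u_i(v_i)=b_i$. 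After $n$ steps $w$ is the unique element realizing all these choices and $\phi^*=\phi\circ w$. Each step performs one orbit computation plus a linear number of evaluations of $\phi$, so the procedure runs in polynomial time; and since $\phi$ and every partial product are injective, the minimizing orbit point $b_i$ is unique, so the greedy choice is forced at every level and the output is canonical.

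The only step requiring real care — everything else being bookkeeping — is verifying that this greedy recursion is correct, i.e.\ that after fixing coordinates $1,\dots,i-1$ the set of $h\in H$ still minimizing the prefix is exactly the left coset $u_1u_2\cdots u_{i-1}H^{(i-1)}$, and hence that terminating at step $n$ yields the true lexicographic minimum of $\phi H$. This is the classical ``smallest element in a coset'' routine for permutation groups, and I would spell out the coset bookkeeping explicitly, since the stabilizer chain is defined by fixing $v_1,\dots,v_i$ while the quantities actually being minimized are the $\phi$-values of orbit points pushed forward through the partial product $u_1\cdots u_{i-1}$; keeping these two ``frames'' straight is where a naive write-up could slip. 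Once correctness of the greedy step is established, independence of $\phi^*$ from the incidental inputs is immediate, as the algorithm provably outputs the intrinsically defined object $\min\mathrm{Iso}(G_1,G_2)$.
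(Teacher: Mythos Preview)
Your proposal is correct and follows essentially the same approach as the paper: both recognize that $\mathrm{Iso}(G_1,G_2)$ is a coset $\phi H$ of $H=Aut(G_1)$, build the stabilizer chain via Schreier's lemma / Schreier--Sims, and then greedily minimize one coordinate at a time using coset representatives at each level to extract the lexicographically least element of that coset. Your write-up is in fact more explicit about the bookkeeping (maintaining the running product $w$ and flagging the ``two frames'' subtlety), but the underlying algorithm and its justification are the same as the paper's, which cites \cite{C73} for the canonical-coset-representative routine.
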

\begin{proof}
We use the algorithm given in \cite{C73} to compute a canonical coset representative, observing that the set of isomorphisms between $G_1$ and $G_2$ is a coset of the automorphism group of $G_1$. Let $\alpha_1,...,\alpha_t$ be a basis of $H$, i.e., a set such that any $h \in H$ fixing $\alpha_1,...,\alpha_t$ is the identity. Let $H_i$ be the subgroup of $H$ that stabilizes $\alpha_1,...,\alpha_{i-1}$. Now, let $U_i$ be a set of coset representatives of $H_{i+1}$ in $H_i$. Given the generators of $H_i$, we can calculate $U_i$, and by Schreier's theorem we can calculate the generators for $H_{i+1}$. In this fashion, we can get generators and coset representatives for all the $H_i$. To produce $\phi^*$, we do the following. 
\begin{codebox}
\Procname{\textsc{Find-First-Isomorphism}}
\li $\phi^* = \phi$
\li For $i = 1,...,t$ \Do 
    \li Let $P_i = \{ \phi^*u | u \in U_i \}$.
    \li Set $\phi^* = \arg \min_{\phi \in P_i} (\phi(\alpha_i))$.
    \End
\end{codebox}
To see that this produces a unique isomorphism that does not depend on $\phi$, observe that $\phi^*(\alpha_1)$ is the minimum possible value of $\phi(\alpha_1)$ over all isomorphisms of $G_1$ to $G_2$ as $U_1$ is a set of coset representatives for the stabilizer of $\alpha_1$ over $H$. Also, if $\phi^*(\alpha_i)$ is fixed for $i \in \{1,...,k\}$, then $\phi^*(\alpha_{k+1})$ is the minimum possible value of $\phi(\alpha_{k+1})$ over all isomorphisms which take $\alpha_1$ to $\phi^*(\alpha_1)$, $\alpha_2$ to $\phi^*(\alpha_2)$,..., and $\alpha_k$ to $\phi^*(\alpha_k)$, as $U_{i+1}$ stabilizes $\alpha_1,...,\alpha_k$, so everything in $P_{i+1}$ takes $\alpha_1$ to $\phi^*(\alpha_1)$, $\alpha_2$ to $\phi^*(\alpha_2)$,..., and $\alpha_k$ to $\phi^*(\alpha_k)$. This implies that $\phi^*$ does not depend on $\phi$ and is unique.
\end{proof}

Given this result, this means that it suffices to show a protocol that lets the verifier obtain a set of generators for the automorphism group of $G_1$ and an isomorphism that are correct with high probability, as by the above lemma this can be used to obtain a unique isomorphism between $G_1$ and $G_2$ independent of the isomorphism or the generators. 
\begin{theorem}
There exists an interactive protocol for graph isomorphism such that with high probability, the isomorphism that is output by the verifier is unique, where in the case of a cheating prover the verifier fails instead of outputting a non-unique isomorphism. In other words, finding an isomorphism between graphs can be done in $\psd \AM$.
\end{theorem}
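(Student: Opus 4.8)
The plan is to combine Lemma~\ref{uniquefromautomorphism} with the Goldwasser--Sipser set lower bound protocol. By that lemma it suffices to exhibit a constant-round public-coin protocol in which, on an input pair $(G_1,G_2)$ of isomorphic graphs on vertex set $[n]$, the verifier either outputs $\bot$ or obtains a pair $(H,\phi)$ that, except with small probability over its coins and against any prover, satisfies $\langle H\rangle = Aut(G_1)$ and $\phi(G_1)=G_2$. The verifier then runs the polynomial-time procedure of the lemma on $(G_1,G_2,H,\phi)$ and outputs the resulting canonical isomorphism $\phi^{*}$. Since $\phi^{*}$ depends only on $(G_1,G_2)$, this yields pseudo-determinism, provided the protocol really does certify $(H,\phi)$ in the above sense.

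The protocol I would use is the following. The prover sends permutations $g_1,\dots,g_k\in S_n$ (claimed generators of $Aut(G_1)$) together with a permutation $\phi\in S_n$. The verifier deterministically checks that $\phi$ is an isomorphism $G_1\to G_2$ and that each $g_i$ is an automorphism of $G_1$, outputting $\bot$ if any check fails; this already forces $\langle g_1,\dots,g_k\rangle\le Aut(G_1)$. Using the Schreier--Sims algorithm the verifier computes in polynomial time $m:=|\langle g_1,\dots,g_k\rangle|$ (note $m\mid n!$ by Lagrange). The only remaining thing to pin down is that $m=|Aut(G_1)|$, i.e.\ that this subgroup is not proper. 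For this, let $X$ be the set of graphs on $[n]$ isomorphic to $G_1$; membership in $X$ is an $\NP$ predicate (the witness is an isomorphism), and by orbit--stabilizer $|X|=n!/|Aut(G_1)|$. Prover and verifier then run the set lower bound protocol of Goldwasser and Sipser \cite{GS} to certify $|X|\ge n!/m$; if it rejects, the verifier outputs $\bot$, and otherwise it outputs $\phi^{*}$ obtained from Lemma~\ref{uniquefromautomorphism} applied to $(G_1,G_2,\langle g_1,\dots,g_k\rangle,\phi)$. The whole interaction is $O(1)$ public-coin rounds with a polynomial-time verifier, hence an $\AM$ protocol by the round-collapse noted after the definitions.

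For correctness: completeness holds because an honest (unbounded) prover can send true generators of $Aut(G_1)$ and a true isomorphism, so $m=|Aut(G_1)|$, the claimed threshold equals $|X|$, and every check including the lower bound protocol passes with high probability; the verifier then outputs $\phi^{*}$. For soundness and pseudo-determinism, fix any prover. If a deterministic check fails the verifier outputs $\bot$. Otherwise $\langle g_1,\dots,g_k\rangle\le Aut(G_1)$; if this containment were proper its index would be at least $2$, so $m\le|Aut(G_1)|/2$ and $|X|=n!/|Aut(G_1)|\le\tfrac12\,(n!/m)$ --- the claimed threshold is at least twice the true size of $X$, so the Goldwasser--Sipser protocol outputs $\bot$ with high probability. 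Thus, except with small probability, whenever the verifier does not output $\bot$ we have $\langle g_1,\dots,g_k\rangle = Aut(G_1)$ and $\phi(G_1)=G_2$, so by Lemma~\ref{uniquefromautomorphism} the verifier's output is the prover-independent value $\phi^{*}(G_1,G_2)$. Finally, if $G_1\not\cong G_2$ no valid $\phi$ exists, so the first check always fails and the verifier always outputs $\bot$. Standard amplification brings the constants to the $2/3$ thresholds of the definition, so graph isomorphism is in $\psd\AM$.

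The main obstacle is enforcing \emph{exactness}: Lemma~\ref{uniquefromautomorphism} produces a prover-independent isomorphism only when it is handed the \emph{full} group $Aut(G_1)$, so a cheating prover who supplies a proper subgroup could otherwise steer the verifier to a non-canonical answer. The set lower bound on $|X|$ is exactly the device that rules this out, and the quantitative fact that makes a single invocation of Goldwasser--Sipser enough --- rather than the usual product-amplification trick --- is Lagrange's theorem: a proper subgroup has index at least $2$, giving an automatic factor-$2$ multiplicative gap between the honest and cheating cases. A secondary point, which is classical and which I would treat as routine, is that all the permutation-group computations the verifier performs --- the order, a base and strong generating set, and the canonical coset representative of Lemma~\ref{uniquefromautomorphism} --- run in polynomial time.
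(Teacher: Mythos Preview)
Your proposal is correct and takes a genuinely different route from the paper. The paper certifies $Aut(G_1)$ by invoking Mathon's reduction \cite{M79}, which turns the computation of a generating set for $Aut(G_1)$ into $O(n^4)$ instances of graph isomorphism and graph non-isomorphism; these are then handled in parallel by the standard GMW private-coin protocols, after which Lemma~\ref{uniquefromautomorphism} is applied exactly as you do. You instead certify $Aut(G_1)$ in one shot: have the prover hand over generators and an isomorphism, check the easy direction $\langle g_i\rangle\le Aut(G_1)$ deterministically, compute $m=|\langle g_i\rangle|$ via Schreier--Sims, and then rule out proper containment with a single Goldwasser--Sipser lower bound on the isomorphism class $X$, exploiting orbit--stabilizer and the Lagrange factor-$2$ gap. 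Your argument is more self-contained (it does not need Mathon's black box) and is public-coin from the start, whereas the paper's protocol is private-coin and only becomes public-coin through the generic \cite{GS} transformation. On the other hand, the paper's route keeps the honest prover ``weak'' --- polynomial time with a graph-isomorphism oracle suffices for every subprotocol --- while in your protocol the prover's task in the set-lower-bound step (exhibit a member of $X$ hitting a random hash target) is not obviously doable with only a GI oracle; this is exactly the trade-off the paper flags as open at the end of Section~\ref{sec:GI}. One minor quibble: the bare Goldwasser--Sipser protocol does not by itself separate $|X|\ge K$ from $|X|\le K/2$ with constant advantage close to $1$; you still need a constant number of parallel repetitions of the protocol (not product amplification of the set) to reach the $2/3$ thresholds, but this is routine and does not affect the round count.
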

\begin{proof}
From Lemma \ref{uniquefromautomorphism}, it suffices to show an interactive protocol that computes the automorphism group of a graph in a verifiable fashion. \cite{M79} reduces the problem of computing the generators of the automorphism group to the problem of finding isomorphisms. Using this reduction, we can make a constant-round interactive protocol to determine the automorphism group by finding the isomorphisms in parallel. The reason we can do this in parallel is that \cite{M79} implies that there are $O(n^4)$ different pairs of graphs to check and for each pair of graphs we either run the graph isomorphism protocol or the graph non-isomorphism protocol. In the case of the graph isomorphism protocol, the verifier need only accept with an isomorphism in hand; for graph non-isomorphism, the messages sent to the prover are indistinguishable between the two graphs when they are isomorphic, so since the graphs and permutations are chosen independently, there is no way for the prover to correlate their answers to gain a higher acceptance probability for isomorphic graphs. Thus this means that the verifier can determine the automorphism group of a graph and verify that it is indeed the entire automorphism group. Using Lemma \ref{uniquefromautomorphism} we then see that the prover just has to give the verifier an isomorphism, and verifier can compute a unique isomorphism using the automorphism group.
\end{proof}

We note that in the above protocol, the prover only needs to have the power to solve graph isomorphism (and graph non-isomorphism). Also, we note that the above protocol uses private coins. While the protocol can be simulated with a public coin protocol \cite{GS}, the simulation requires the prover to be very powerful. It remains open to determine whether there is a pseudo-deterministic $\AM$ protocol for graph isomorphism which uses public coins, and uses a ``weak" prover (one which is a polynomial time machine with access to an oracle solving graph isomorphism).
\end{section}

\begin{section}{Lower bound on pseudo-deterministic $\AM$ algorithms}\label{lowerbound}
In this section, we establish that if any $\NP$-complete problem has an $\AM$ protocol that outputs a unique witness with high probability, then the polynomial hierarchy collapses. We rely on techniques in \cite{HNOS} combined with the fact that $\AM$ is contained in $\NP/poly$. 

We begin by proving that $\psd\AM \subseteq \psd\NP/poly$:

\begin{lemma}\label{np/poly}
Suppose that there is a $\psd\AM$ protocol for a search problem $R$, which on input $x \in L_R$, outputs $f(x)$. Then, the search problem $R$ has a $\psd\NP/poly$ algorithm which, on input $x$, outputs $f(x)$.
\end{lemma}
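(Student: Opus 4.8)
The plan is to show that a $\psd\AM$ protocol can be converted to a $\psd\NP/\mathrm{poly}$ algorithm by essentially using the standard Adleman-style argument that $\AM \subseteq \NP/\mathrm{poly}$ (equivalently, $\BPP \subseteq \mathrm{P}/\mathrm{poly}$), but applied uniformly across the two soundness-type conditions in the definition of $\psd\AM$, and then verifying that the resulting nondeterministic machine still satisfies the pseudo-determinism guarantee.

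First I would recall the $\psd\AM$ setup: for each $x \in L_R$ there is a canonical $s(x)$, and for a $1 - \epsilon$ fraction of random strings $r$, (i) $\exists z\, V(x,r,z) = s(x)$ and (ii) $\forall z\, V(x,r,z) \in \{s(x),\bot\}$; for $x \notin L_R$, a $1-\epsilon$ fraction of $r$ have $\forall z\, V(x,r,z) = \bot$. The first step is amplification: by running the protocol in parallel $O(n)$ times and having the verifier output the majority answer, I would drive the error down so that for each fixed $x$, the probability over $r$ of a ``bad'' outcome (where either some $z$ makes $V$ output something $\notin\{s(x),\bot\}$, or no $z$ makes it output $s(x)$ when $x \in L_R$, or some $z$ makes it output non-$\bot$ when $x\notin L_R$) is below $2^{-n}$. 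Crucially all three conditions are of the form ``for a chosen $r$, the set of prover messages $z$ can only steer the verifier within an allowed set of outputs,'' so amplification by repetition and majority preserves the pseudo-deterministic structure.

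Next, the standard union-bound argument: since for each of the $\le 2^n$ inputs $x$ of length $n$ the bad-$r$ probability is $< 2^{-n}$, there exists a single random string $r_n$ (depending only on $n$) that is good for \emph{all} $x$ of length $n$ simultaneously. I take $r_n$ as the polynomial-length advice string. The $\psd\NP/\mathrm{poly}$ machine, on input $x$ of length $n$ with advice $r_n$, nondeterministically guesses a prover message $z$, runs the (amplified) verifier $V(x, r_n, z)$, and outputs whatever $V$ outputs (treating $\bot$ as a rejecting/``output $\bot$'' path). Because $r_n$ is good for $x$: if $x \in L_R$ some guess $z$ yields $s(x) = f(x)$, and no guess yields anything outside $\{s(x),\bot\}$ — exactly the pseudo-deterministic $\NP$ condition; if $x \notin L_R$ every guess yields $\bot$. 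Hence $R \in \psd\NP/\mathrm{poly}$ with the same output function $f$.

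**The main obstacle** I anticipate is not any single hard step but making sure the amplification genuinely controls the ``$\forall z$'' condition rather than just the ``$\exists z$'' one — i.e., that after repetition-and-majority, an all-powerful prover still cannot exploit the many parallel copies to flip the majority output to a wrong value except on a tiny fraction of $r$. This requires noting that a wrong majority output forces at least half the copies to individually output a wrong (non-$\{s(x),\bot\}$) value on their respective sub-strings, each of which happens with probability $\le \epsilon$ independently, so a Chernoff bound gives the desired $2^{-n}$; and one must handle the mixed case where the prover outputs $\bot$ in some copies and wrong values in others, which is why the verifier's aggregation rule should be ``output the unique non-$\bot$ majority value if one exists, else $\bot$'' rather than a naive plurality. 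Once that aggregation rule is fixed, the rest is the routine union bound and is essentially identical to the proof that $\AM \subseteq \NP/\mathrm{poly}$.
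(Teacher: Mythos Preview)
Your proposal is correct and follows essentially the same approach as the paper's proof: amplify the error to below $2^{-n}$, take a union bound over all $x$ of length $n$ to obtain a single good random string $r_n$ as advice, and have the $\NP/\mathrm{poly}$ machine guess the prover message and run the verifier with $r_n$ hard-wired. If anything, your treatment of the amplification step (in particular the aggregation rule and the Chernoff argument for the $\forall z$ condition) is more careful than the paper's, which simply invokes ``standard amplification techniques.''
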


\begin{proof}
This proof is similar to the proof that $\AM \subseteq \NP/poly$, which uses techniques similar to those of Adleman's theorem \cite{A79}, showing $\BPP \subseteq \mathrm{P}/poly$.

Consider the $\psd\AM$ protocol, and suppose that on input $x \in L_R$, it outputs $f(x)$.

Since we are guaranteed that when the verifier of the the $\psd\AM$ accepts, it will output $f(x)$ with high probability, we can use standard amplification techniques to show that the verifier will output $f(x)$ with probability $1 - o(\exp(-n))$, assuming an honest prover, and will output anything other than $f(x)$ with probability $o(\exp(-n))$, even with a malicious prover. Then, by a union bound, there exists a choice of random string $r$ that makes the verifier output $f(x)$ for all inputs $x \in \{0,1\}^n$ of size $n$ with an honest prover, and that for malicious provers, the verifier will either reject or output $f(x)$. We encode this string $r$ as the advice string for the $\NP/poly$ machine.

The $\NP/poly$ machine computing $f$ can read $r$ off the advice tape and then guess the prover's message, and whenever the verifier accepts, $f(x)$ will be output by that branch. Thus $f(x)$ can be computed by an $\NP/poly$ machine.
\end{proof}

Next, we show that if an $\NP$-complete problem has a pseudo-deterministic-$\NP/poly$ algorithm, then the polynomial hierarchy collapses.

\begin{theorem}
Let $L \in \NP$ be an $\NP$-complete problem. Let $R$ be a polynomial time algorithm such that there exists a polynomial $p$ so that $x \in L$ if and only if $\exists y \in \{0,1\}^{p(|x|)} R(x,y)$. Suppose that there is a $\psd\AM$ protocol that when given some $x\in L$, outputs a unique $f(x) \in \{0,1\}^{p(|x|)}$ such that $R(x,f(x)) = 1$. Then, $\NP \subseteq \coNP/poly$ and the polynomial hierarchy collapses to the third level.
\end{theorem}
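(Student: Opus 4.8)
The plan is to reduce, via Lemma~\ref{np/poly}, to a non-uniform version of the argument of Hemaspaandra et al.~\cite{HNOS}, and then to run their (Kadin-style) easy--hard argument in the presence of advice. First I would apply Lemma~\ref{np/poly}: the hypothesized $\psd\AM$ protocol for $R$ yields a pseudo-deterministic $\NP/poly$ machine $N$ computing the canonical witness function $f$ --- that is, $N$ uses advice $\alpha_{|x|}$, on \emph{every} nondeterministic branch outputs $\bot$ or $f(x)$, outputs $f(x)$ on some branch when $x\in L$, and outputs only $\bot$ when $x\notin L$. So it suffices to show: if an $\NP$-complete $L$ has such a machine, then $\NP\subseteq\coNP/poly$. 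Since $\coNP/poly$ is closed under polynomial-time many-one reductions and $L$ is $\NP$-complete, it is enough to place $\overline L\in\NP/poly$; and using witness-preserving (Levin) reductions we may as well take $L=\mathrm{SAT}$, with $f$ a single-valued canonical satisfying-assignment function still computed by such an $N$.

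The heart of the matter is that pseudo-determinism pins $f$ down so tightly that the relevant search predicates become $\NP/poly$ predicates: for $\phi\in\mathrm{SAT}$, both ``$f(\phi)=a$'' (guess a branch of $N(\phi)$ and check it outputs $a$) and ``$f(\phi)\neq a$'' (guess a branch that outputs a value $\neq a$) lie in $\NP/poly$ --- here one genuinely uses that every branch, not merely the honest one, outputs $\bot$ or $f(\phi)$. Iterating $f$ down the self-reduction of $\mathrm{SAT}$ I would build a self-reduction-consistent canonical assignment $F$ (with $F(\phi)_1=f(\phi)_1$ and, in general, $F(\phi)_k = f\big(\phi|_{x_1=F(\phi)_1,\dots,x_{k-1}=F(\phi)_{k-1}}\big)_1$), for which ``$F(\phi)=a$'' is again in $\NP/poly$ and which satisfies $F\big(\phi|_{x_1=F(\phi)_1}\big)=F(\phi)_{2\dots n}$ whenever $\phi\in\mathrm{SAT}$.

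On top of this structure I would run the Kadin/HNOS easy--hard argument: for each input length, one either shows from $F$ alone that unsatisfiability of formulas of the relevant lengths has polynomial-size $\NP$-proofs (the ``easy'' case), or exhibits a single ``hard'' unsatisfiable formula which, taken as advice, lets one build such proofs for all other formulas (the ``hard'' case). Either way $\overline{\mathrm{SAT}}\in\NP/poly$, hence $\mathrm{SAT}\in\coNP/poly$, hence $\NP\subseteq\coNP/poly$. Finally, $\NP\subseteq\coNP/poly$ is equivalent to $\coNP\subseteq\NP/poly$, so by Yap's theorem the polynomial hierarchy collapses to $\Sigma_3^p$, its third level, which together with $\NP\subseteq\coNP/poly$ is the claimed conclusion.

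The routine steps are the reduction to $\psd\NP/poly$ and the concluding appeal to Yap's theorem. The main obstacle is making the easy--hard argument interact correctly with the \emph{two} sources of non-uniformity now present --- the advice from Lemma~\ref{np/poly} fixing the verifier's coins per input length, and the advice hard-wiring a Kadin-hard instance --- while keeping the self-reduction pruning polynomial-size; it is precisely the uniqueness of the canonical witness, together with the fact that pseudo-determinism constrains adversarial branches and not just the honest prover, that keeps the intermediate predicates inside $\NP/poly$ and lets the whole argument close.
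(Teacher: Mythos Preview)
Your opening move---applying Lemma~\ref{np/poly} to pass from $\psd\AM$ to a $\psd\NP/poly$ machine for the canonical witness function $f$---matches the paper exactly, and your observation that pseudo-determinism forces \emph{every} nondeterministic branch to output $\bot$ or $f(x)$ (so that the relevant predicates land in $\NP/poly$) is precisely the leverage the paper uses.

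Where you diverge is in how to get from this $\NP/poly$-computable $f$ to $\NP\subseteq\coNP/poly$. You propose passing to SAT via a Levin reduction, iterating $f$ along the self-reduction tree to build a self-reduction-consistent assignment $F$, and then running a Kadin-style easy--hard dichotomy. The paper instead follows the HNOS construction more directly and never touches self-reducibility: it builds a \emph{selector} $g(x,y)$ by reducing the $\NP$ language ``at least one of $x,y$ lies in $L$'' back to $L$ (using $\NP$-completeness), applying $f$ to the reduced instance to get a canonical witness, and reading off which of $x$ or $y$ that witness certifies. Because $f$ is single-valued, so is $g$, and it is $\NP/poly$-computable. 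A halving argument then produces a polynomial-size advice set $S\subseteq L$ such that every $x\in L$ of length $n$ has some $s\in S$ with $g(x,s)=x$; to certify $x\notin L$ one checks, in $\NP/poly$, that $g(x,s)=s$ for all $s\in S$.

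The paper's route is shorter and avoids both the self-reduction machinery and the easy--hard case split. Your plan is plausible in outline, but be aware that what you call the ``Kadin/HNOS easy--hard argument'' conflates two distinct techniques: HNOS themselves use the selector-plus-tournament argument the paper reproduces, while Kadin's easy--hard method was developed for Boolean-hierarchy collapses and would need nontrivial adaptation here (you do not say what the ``easy'' and ``hard'' cases actually are in terms of $F$). If you mean to invoke \cite{HNOS}, the selector construction is the one to carry out, and it makes the self-reduction step and the witness-preserving reduction to SAT unnecessary.
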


\begin{proof}
From Lemma \ref{np/poly}, we have that there exists $\psd\NP/poly$ algorithm that given $x \in L$, outputs a unique witness $f(x)$ for $x$. Given this function $f$, we construct a function $g(x,y)$ which outputs one of $x, y$ or $\bot$, such that either $g(x, y) = \bot$ or $g(x, y) \in L$ (where the latter holds if at least one of $x$ or $y$ is in $L$). Furthermore, we will ensure that $g$ is computable in $\NP/poly$.

To see this, let $g'(x,y)$ be the function which outputs the set $\{x,y\} \cap L$. We construct $g$ by reducing the language $\{ (x,y) | g'(x,y) \neq \emptyset \}$ (which is in $\NP$, and hence reducible to $L$, since $L$ is $\NP$-complete) to $L$ and computing $f$ to find a unique witness in $\NP/poly$. We then test whether that witness is a witness for $x$ or for $y$. If the witness is for $x$, we set $g(x, y) = x$. Otherwise, if the witness is for $y$, we set $g(x, y) = y$. We view $g$ as a function on the set $\{x,y\}$. That is, we set $g(x, y) = g(y, x)$ (if a function $g$ does not satisfy this property, we can create a $g^*$ satisfying this property by setting $g^*(x, y) = g(min(x, y), max(x, y))$).

Using $g$, we will show how to compute $L$ in $\coNP/poly$. We construct the advice string for $L$ for length $n$ as follows. Start out with $S = \emptyset$. Every iteration, we can find a $y \in \{0,1\}^n \cap L$ such that $g(x,y) = x$ for at least half of the set $\{ x \in \{0,1\}^n \cap L | g(x,s) = s \forall s \in S \}$. Such an $s$ exists since in expectation, when picking a random $s$, half of the $x$'s will satisfy $g(x, s) = x$. If we keep doing this, we get a set $S$ with $|S| \leq poly(n)$ such that for every $x \in L$ of length $n$, there exists an $s \in S$ such that $g(x,s) = x$.

Now, the algorithm to check if $x \notin L$ in $\NP/poly$ is as follows: first, we compute $g(x,s)$ for every $s \in S$, where $S$ is constructed as above and put on the advice tape, which can be done in $\NP/poly$, and check that $g(x,s) = s$ for every $s \in S$ which is possible because $|S|$ is polynomial in $n$. It is clear that this algorithm accepts if $x \notin L$ and rejects if $x \in L$, so therefore $L \in \coNP/poly$, which implies that $\NP \subseteq \coNP/poly$. Furthermore, $\NP \subseteq \coNP/poly$ implies that the polynomial hierarchy collapses to the third level.
\end{proof}

\end{section}

\begin{section}{Pseudo-deterministic derandomization for $\BPP$ in subexponential time $\MA$}

In this section, we will show how to use known circuit lower bounds to get pseudo-deterministic subexponential time $\MA$ protocols for problems in search-$\BPP$ for infinitely many input lengths. In \cite{OS}, it is shown that there is a subexponential time pseudo-deterministic $\ZPP$ algorithm for infinitely many input lengths for all properties which have inverse polynomial density, and are testable in polynomial time\footnote{We believe that their analysis can be improved to get a half-exponential time bounded-error pseudo-deterministic algorithm, which is better than our result for properties of inverse polynomial density.}. A notable example of such a property is primality. So as a corollary, in \cite{OS} it is shown that given some integer $n$, one can find a prime greater than $n$ in pseudo-deterministic subexponential time, for infinitely many input lengths.

In their construction, the condition of high density is required because they show that either there exists a subexponential sized hitting set for infinitely many input lengths which can be used to find strings with the property deterministically, or a complexity collapse happens which implies circuit lower bounds which give pseudo-deterministic algorithms. In the case of $\MA$, unconditional circuit lower bounds for $\MA_{EXP}$ have been shown \cite{MVW99}, which means that inverse polynomial density is no longer required. Hence, we can obtain a pseudo-deterministic $\MA$ algorithm directly from circuit lower bounds. Compared to \cite{OS}, our result manages to show a certain pseudo-derandomization for all problems in search-$\BPP$ (and not just those with high density), but requires a prover.

\begin{theorem}
Given a problem $R$ in search-$\BPP$, it is possible to obtain a pseudo-deterministic $\MA$ algorithm for $R$ where the verifier takes subexponential time for infinitely many input lengths.
\end{theorem}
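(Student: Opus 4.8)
The plan is to run a hardness-versus-randomness argument on top of the unconditional circuit lower bound for $\mathrm{MA}_{EXP}$ of Miltersen, Vinodchandran and Watanabe \cite{MVW99}: their lower bound feeds the Nisan--Wigderson generator, which yields a subexponential-time $\MA$ derandomization of (promise-)$\BPP$ that is correct on infinitely many input lengths, and this derandomization then packages into a pseudo-deterministic $\MA$ protocol for the search problem, with the prover used to certify the generator and the verifier reading off a lexicographically canonical witness. The reason the end product is an $\MA$ protocol rather than a deterministic algorithm is precisely that the hard function lives in an exponential version of $\MA$ rather than in $\mathrm{E}$, so evaluating it --- and hence the generator --- requires the prover.

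First I would normalize the ingredients. On the search-$\BPP$ side, after standard amplification (and some massaging of the definition so the checker has two-sided error on the finder's outputs, not just one-sided) we may assume a finder $A$ that on $x\in L_R$ outputs a valid witness except with probability $2^{-n}$ and outputs $\bot$ except with probability $2^{-n}$ when $x\notin L_R$, together with a checker $\widetilde B$ that on any pair $(x,y)$ with $y$ in the image of $A(x,\cdot)$ accepts iff $(x,y)\in R$ except with probability $2^{-n}$. On the hardness side, \cite{MVW99} gives a language $L_{\mathrm{hard}}$ with $L_{\mathrm{hard}}\notin\mathrm{P}/poly$ that is computable by an $\MA$ protocol with a $2^{O(\ell)}$-time verifier; inspecting their proof one may take $L_{\mathrm{hard}}$ to be instance-checkable / downward self-reducible (it is essentially a $\mathrm{PSPACE}$-complete language when $\mathrm{PSPACE}\not\subseteq\mathrm{P}/poly$ and a padded language obtained from $\mathrm{PSPACE}=\MA$ otherwise), so that the entire truth table of $L_{\mathrm{hard}}$ on $\{0,1\}^\ell$ --- both the membership and the non-membership entries --- can be certified to a $2^{O(\ell)}$-time verifier by a prover message of length $2^{O(\ell)}$. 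Finally, $L_{\mathrm{hard}}\notin\mathrm{P}/poly$ means that for every polynomial $p$ there are infinitely many lengths $\ell$ on which $L_{\mathrm{hard}}$ requires circuits larger than $p(\ell)$. Plugging $L_{\mathrm{hard}}$ into the Nisan--Wigderson / Babai--Fortnow--Nisan--Wigderson construction (which relativizes and only probes the hard function on its own truth table, so a certified truth table suffices for an $\MA$ verifier to evaluate it), the usual parameter choices give, for every $\epsilon>0$, infinitely many $n$ and a generator $G_n\colon\{0,1\}^{n^{\epsilon}}\to\{0,1\}^{\mathrm{poly}(n)}$ that $\tfrac1{100}$-fools all size-$\mathrm{poly}(n)$ circuits and can be evaluated on all $2^{n^{\epsilon}}$ seeds in time $2^{O(n^{\epsilon})}$ once the truth-table certificate is in hand.

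Now the protocol. On input $x$ the verifier (1) receives from Merlin the truth-table certificate for $L_{\mathrm{hard}}$ at the relevant length and checks it, outputting $\bot$ if the check fails; (2) using the validated $G_n$, enumerates all seeds $\sigma$, forms the candidate witnesses $y_\sigma:=A(x,G_n(\sigma))$, sorts the distinct candidates $y^{(1)}<y^{(2)}<\dots$, and runs the amplified checker $\widetilde B$ (with fresh, genuinely random coins) on $y^{(1)},y^{(2)},\dots$ in order, outputting the first $y^{(j)}$ that is accepted, or $\bot$ if none is. Because each $y_\sigma$ is by construction an output of $A$ on $x$, every accepted candidate is a genuine witness; because $G_n$ fools the $\mathrm{poly}(n)$-size event ``$\widetilde B$ accepts the output of $A(x,\cdot)$'', for $x\in L_R$ the candidate set contains a valid witness, so with overwhelming probability the output equals the lexicographically least valid candidate $y^{**}$, which is a fixed function of $x$ and of the (essentially unique) truth table --- hence the protocol is pseudo-deterministic and outputs a correct witness; for $x\notin L_R$ every candidate is invalid and $\widetilde B$ rejects all of them except with probability at most $2^{n^{\epsilon}}\cdot 2^{-n}$, so the verifier outputs $\bot$; and a cheating Merlin who submits a wrong truth table is caught except with the (amplifiable) soundness error of the truth-table check, so he can only cause an abort or, with small probability, an arbitrary output --- never a valid-but-non-canonical witness with non-negligible probability. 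Since the verifier runs in time $2^{O(n^{\epsilon})}$ and this works on infinitely many $n$ (an infinite set that may depend on $\epsilon$), this is the claimed pseudo-deterministic $\MA$ protocol.

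I expect the crux to be pushing the hardness-versus-randomness tradeoff through with an $\mathrm{MA}_{EXP}$ hard function rather than an $\mathrm{E}$ hard function: this is what forces one to certify the whole truth table of the hard function cheaply, and hence to work with an instance-checkable hard language, and it is exactly the reason the conclusion is an $\MA$ protocol instead of a deterministic subexponential algorithm. By comparison the search-to-decision packaging is routine in spirit, but two points need care: the search-$\BPP$ checker is only guaranteed to behave on strings actually produced by the finder, which is why every candidate is threaded through $A$ before $\widetilde B$ is applied and why the definition must first be amplified/massaged into a two-sided form; and one must keep straight that the pseudo-determinism (consistency across executions) comes from the output being a deterministic function of $x$ and the essentially unique certified truth table, while the residual randomness is used only in the truth-table check and in the final verification pass, where it cannot change the canonical answer.
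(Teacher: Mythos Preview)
Your proposal is correct and follows essentially the same approach as the paper: feed the \cite{MVW99} circuit lower bound for $\MA_{EXP}$ into the Nisan--Wigderson generator, have Merlin certify the hard function's truth table so the verifier can enumerate all pseudorandom strings in subexponential time, and output the first (canonical) candidate that passes the search-$\BPP$ checker. If anything, you are more careful than the paper on several points it leaves implicit---the need to certify both $0$ and $1$ entries of the truth table (you handle this via instance-checkability, the paper by taking the hard language in $\MA_{EXP}\cap\coMA_{EXP}$), the amplification and two-sidedness of the checker, and why a cheating prover cannot induce a non-canonical output except with negligible probability.
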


\begin{proof}
From \cite{MVW99}, we see that $\MA_{EXP} \cap \coMA_{EXP}$ cannot be approximated by half-exponential sized circuits for infinitely many input lengths. From \cite{NW94}, this means that in half-exponential time $\MA$, we can construct a pseudorandom generator with half-exponential stretch which is secure against any given polynomial-size circuit for infinitely many input lengths. If the verifier then runs through each output of the pseudorandom generator on the search-BPP problem and for each possible output tests whether it is a valid output (which can be done in $\BPP$, and hence also in $\MA$). Then, it returns the first such valid output.

This will output the same solution whenever the verifier both gets the correct truth-table for the PRG, and succeeds in testing for each PRG output whether the output it provides is valid. Both of these happen with high probability, and thus this is a pseudo-deterministic subexponential-time $\MA$ algorithm for any problem in search-BPP which succeeds on infinitely many input lengths.
\end{proof}

\end{section}

\section{Uniqueness in $\NL$}

In this section, we prove that every problem in search-$\NL$ can be made pseudo-deterministic:

\begin{theorem}[Pseudo-determinism $\NL$]
Every search problem in search-$\NL$ is in $\psd\NL$.
\end{theorem}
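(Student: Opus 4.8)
The plan is to show that any search-$\NL$ machine can be converted into a pseudo-deterministic $\NL$ machine computing a \emph{canonical} solution, namely the lexicographically smallest valid output. The two ingredients I expect to use are (i) the fact that $\NL = \coNL$ (Immerman--Szelep\-cs\'enyi), so that an $\NL$ machine can verify non-existence of witnesses, and (ii) the ability to do prefix-search bit by bit, checking at each step whether there is a valid output of the search problem consistent with the prefix chosen so far.

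First I would set up the following subroutine. Given the search-$\NL$ machine $M$, consider the ``consistency language''
\[
C=\{(x,p)\;:\;\exists\text{ an accepting computation of }M(x)\text{ whose output }y\text{ has }p\text{ as a prefix}\}.
\]
Since $M$ runs in log space and nondeterministically, $C\in\NL$: on input $(x,p)$ we simulate $M(x)$, guessing its nondeterministic bits, and simultaneously check that the output bits produced agree with $p$ on the first $|p|$ positions (only a pointer into $p$ and the current simulated configuration of $M$ need be stored, all of logarithmic size). By $\NL=\coNL$, the complement $\bar C$ is also in $\NL$.

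Next I would describe the pseudo-deterministic machine $V$. On input $x$, $V$ builds the output $s(x)$ one bit at a time. Having committed to a prefix $p$ (initially empty), $V$ runs the $\NL$ procedure for $C$ to test whether $(x,p0)\in C$; using the coNL procedure it can, on the same nondeterministic path, also certify membership or non-membership definitively (a genuinely nondeterministic branch that does not produce a definitive certificate simply outputs $\bot$). If $(x,p0)\in C$, extend $p$ by $0$; otherwise test $(x,p1)\in C$ and extend by $1$ if it holds; if neither holds, and $p$ is already a full valid output (check this with one more call to the $\NL$ verifier of $R$), output $p$; if $p$ is empty and neither extension works and $x$ has no solution, output $\bot$. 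The key point is that the bit chosen at each step is determined \emph{solely by membership in $C$}, which is a fixed language independent of nondeterministic choices, so every branch of $V$ that produces a non-$\bot$ answer produces the same string, namely the lexicographically least $y$ with $(x,y)\in R$; and at least one branch does produce it, since for the correct bits the relevant $C$-membership facts have accepting witnesses. For $x\notin L_R$ every branch outputs $\bot$.

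The main obstacle, and the step to be careful about, is the bookkeeping that keeps everything in logarithmic space while chaining these calls: the prefix $p$ can have polynomial length, so it cannot be stored on the work tape, and the machine must instead recompute earlier bits on demand (each bit of $p$ is itself the result of a recursive prefix query), or equivalently argue that $\NL$ is closed under the relevant polynomially-iterated composition of $\NL$ predicates — which it is, since $\NL$ is closed under log-space Turing reductions and the composition depth here is only logarithmic in terms of the configuration graph, the output length contributing only a poly-size pointer. I would spell out this closure argument (or, more cleanly, phrase the whole construction as a single $\NL$ computation over an implicitly defined polynomial-size configuration graph, with reachability checked via Savitch-style / Immerman--Szelepcs\'enyi counting), and then verify the three defining conditions of $\psd\NL$ against the machine $V$ described above.
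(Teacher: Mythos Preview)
Your high-level plan—use $\NL=\coNL$ to certify each bit and do prefix search—is the right instinct, and you correctly identify the obstacle, but the specific canonical solution you aim for (the lexicographically smallest \emph{output}) makes that obstacle real, and your closing paragraph does not dissolve it. After $j-1$ steps the machine has committed to a prefix $p=y^*_1\cdots y^*_{j-1}$, and the next query to $C$ needs all of $p$, not merely a pointer into it. Recomputing bits of $p$ on demand does not help: to recompute $p_{j-1}$ you must first regenerate $p_1\cdots p_{j-2}$, so the recursion unwinds to depth $j-1$, each level suspending its own $O(\log n)$ state while the inner call runs, for $\Theta(j\log n)$ total work space. Closure of $\NL$ under log-space Turing reductions is likewise inapplicable, since the reducing machine would have to write the polynomial-length query $(x,p0)$ to the oracle tape without log-space access to $p$. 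Your remark that the composition depth is ``only logarithmic'' and that the output length contributes ``only a poly-size pointer'' conflates the index $j$ (indeed $O(\log n)$ bits) with the prefix content $p$ (which is $j$ bits); it is the latter that the query to $C$ requires.

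The paper sidesteps this by choosing a different canonical answer: not the lex-least output string, but the output produced along the lexicographically first \emph{shortest accepting computation path} of $M$. The state carried between iterations is then just the pair (current configuration, remaining distance to accept), both $O(\log n)$ bits. At each step one uses $\NL=\coNL$ to find, among the successor configurations, the first from which an accepting configuration is still reachable in one fewer step, moves there, emits whatever $M$ emits on that transition, and repeats. No polynomial-length prefix is ever stored or regenerated. Switching the canonical object from ``lex-min output string'' to ``output along the lex-first shortest accepting path'' is exactly the missing idea; your parenthetical about ``a single $\NL$ computation over an implicitly defined configuration graph'' is pointing in the right direction but only works once you abandon the lex-min-output target.
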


Intuitively speaking, one can think of the search problem as: given a directed graph $G$, and two vertices $s$ and $t$ such that there is a path from $s$ to $t$, find a path from $s$ to $t$. Note that the standard nondeterministic algorithm of simply guessing a path will result in different paths for different nondeterministic guesses. Our goal will be to find a unique path, so that on different nondeterministic choices, we will not end up with a path which is not the unique one.

The idea will be to find the lexicographically first shortest path (i.e, if the min-length path from $s$ to $t$ is of length $d$, we will output the lexicographically first path of length $d$ from $s$ to $t$). To do so, first we will determine the length $d$ of the min-length path from $s$ to $t$. Then, for each neighbor of $s$, we will check if it has a path of length $d-1$ to $t$, and move to the first such neighbor. Now, we have reduced the problem to finding a unique path of length $d-1$, which we can do recursively.

The full proof is given below:

\begin{proof}
Given a problem in search-$\NL$, consider the set of all min-length computation histories. We will find the lexicographically first successful computation history in this set.

To do so, we first (nondeterministically) compute the length of the min-length computation history. This can be done because $\coNL = \NL$ (so if the shortest computation history is of size $T$, one can show a history of size $T$. Also, because it is $\coNL$ to show that there is no history of size up to $T-1$, we can show that there is no history of size less than $T$ in $\NL$).

In general, using the same technique, given a state $S$ of the $\NL$ machine, we can tell what is the shortest possible length for a successful computation history starting at $S$.

Our algorithm will proceed as follows. Given a state $S$ (which we initially set to be the initial configuration of the $\NL$ machine), we will compute $T$, the length of the shortest successful computation path starting at $S$. Then, for each possible nondeterministic choice, we will check (in $\NL$) whether there exists a computation history of length $T-1$ given that nondeterministic choice. Then, we will choose the lexicographically first such nondeterministic choice, and recurse.

This algorithm finds the lexicographically first computation path of minimal length which is unique. Hence, the algorithm will always output the same solution (or reject), so the algorithm is pseudo-deterministic.
\end{proof}

\section{Structural Results}

In \cite{GGR}, Goldreich et al showed that the set of total search problems solved by pseudo-deterministic polynomial time randomized algorithms equals the set of total search problems solved by deterministic polynomial time algorithms, with access to an oracle to decision problems in $\BPP$. In \cite{matching}, this result was extended to the context of $\RNC$. We show analogous theorems here. In the context of $\MA$, we show that for total search problems, $\psd\MA = \mathrm{search-}\mathrm{P}^{\MA \cap \coMA}$.\footnote{What we call $\mathrm{search-}\mathrm{P}$ is often denoted as $\mathrm{FP}$.} In other words, any pseudo-deterministic $\MA$ algorithm can be simulated by a polynomial time search algorithm with an oracle solving decision problems in $\MA \cap \coMA$, and vice versa.

In the case of search problems that are not total, we show that $\psd\MA$ equals to the class $\mathrm{search-}\mathrm{P}^{\mathrm{promise-}(\MA \cap \coMA)}$, where when the input $x$ is in $L_R$, all queries to the oracle must be in the promise. We note that generally, when having an oracle to a promise problem, one is allowed to query the oracle on inputs not in the promise, as long as the output of the algorithm as a whole is correct for all possible answers the oracle gives to such queries. In our case, we simply do not allow queries  to the oracle to be in the promise. Such reductions have been called \textit{smart} reductions \cite{smartreductions}.

We show similar theorems for $\AM$, and $\NP$. Specifically, we show $\psd\AM = \mathrm{search-}\mathrm{P}^{\mathrm{promise-}(\AM \cap \coAM)}$ and $\psd\NP = \mathrm{search-}\mathrm{P}^{\mathrm{promise-}(\NP \cap \coNP)}$, where the reductions to the oracles are smart reductions.

In the case of total problems, one can use a similar technique to show $\psd\AM = \mathrm{search-}\mathrm{P}^{\AM \cap \coAM}$ and $\psd\NP = \mathrm{search-}\mathrm{P}^{\NP \cap \coNP}$, where the oracles can only return answers to total decision problems.

\begin{theorem}\label{MAstructure}
The class $\psd\MA$ equals the class $\mathrm{search-}\mathrm{P}^{\mathrm{promise-}(\MA \cap \coMA)}$, where on any input $x \in L_R$, the all queries to the oracle are in the promise.
\end{theorem}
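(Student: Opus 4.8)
The plan is to prove the two inclusions separately, with the ``packaging'' direction $\mathrm{search\text{-}P}^{\mathrm{promise\text{-}}(\MA\cap\coMA)} \subseteq \psd\MA$ being the conceptually simpler one and the ``extraction'' direction $\psd\MA \subseteq \mathrm{search\text{-}P}^{\mathrm{promise\text{-}}(\MA\cap\coMA)}$ requiring a careful choice of oracle language.

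\textbf{From oracle machines to $\psd\MA$.} Suppose $A$ is a deterministic polynomial-time search algorithm with oracle access to a promise problem $\Pi=(\Pi_{\mathrm{yes}},\Pi_{\mathrm{no}})$ in promise-$(\MA\cap\coMA)$, such that on every $x\in L_R$ all of $A$'s queries lie in $\Pi_{\mathrm{yes}}\cup\Pi_{\mathrm{no}}$. Fix the promise-$\MA$ protocol $\langle P^{+},V^{+}\rangle$ for $\Pi$ and, since $\Pi\in\mathrm{promise\text{-}}\coMA$, the promise-$\MA$ protocol $\langle P^{-},V^{-}\rangle$ for its complement, with all errors amplified to $2^{-n}$. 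The $\psd\MA$ verifier proceeds as follows: in one message Merlin sends claimed oracle answers $a_1,\dots,a_m\in\{\mathrm{yes},\mathrm{no}\}$ (with $m=\mathrm{poly}(|x|)$ a bound on the number of queries) together with, for each $i$, the Merlin-message of $\langle P^{+},V^{+}\rangle$ if $a_i=\mathrm{yes}$ and of $\langle P^{-},V^{-}\rangle$ if $a_i=\mathrm{no}$. Arthur then simulates $A$ step by step; the $i$-th query $q_i$ is a deterministic function of $x$ and $a_1,\dots,a_{i-1}$, so he can form it, answer it with $a_i$, and run $V^{+}$ or $V^{-}$ on $q_i$ against the supplied sub-message using fresh public coins. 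If every sub-verification accepts, Arthur outputs $A$'s final output; otherwise he outputs $\bot$. Set the canonical value $s(x)=A^{\Pi}(x)$. Completeness is immediate by a union bound over the $\le m$ sub-verifications. For pseudo-determinism on $x\in L_R$, fix any Merlin message and let $j$ be the first index with a wrong claimed answer; since $a_1,\dots,a_{j-1}$ are correct, the query $q_j$ reconstructed by Arthur equals the genuine $j$-th query of $A^{\Pi}(x)$, which lies in the promise, so soundness of the sub-protocol governing index $j$ makes that sub-verification reject w.h.p.\ and Arthur outputs $\bot$; a union bound over $j$ completes the argument, and if no answer is wrong Arthur outputs $s(x)$ or $\bot$. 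For $x\notin L_R$, the smart-reduction convention is that $A$ outputs $\bot$ under every assignment of answers to its (out-of-promise) queries, so Arthur outputs $\bot$ regardless of Merlin's message.

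\textbf{From $\psd\MA$ to oracle machines.} Conversely, given a $\psd\MA$ verifier $V$ with canonical outputs $s(x)$, first amplify so that for every prover message $M'$ we have $\Pr_r[V(x,M',r)\in\{s(x),\bot\}]\ge 1-2^{-n}$ and the honest prover drives the output to $s(x)$ with probability $1-2^{-n}$. Writing $m(|x|)$ for a polynomial length bound on $s(x)$, I would use the oracle $O$ that on input $(x,i)$ has promise $\{(x,i):x\in L_R\}$ and yes-answer ``$i\le|s(x)|$ and $s(x)_i=1$''. The crucial design choice is that the $\MA$ protocol for $O$ \emph{and} the $\MA$ protocol for $\overline{O}$ both have Arthur run $V(x,M',\cdot)$ and \emph{reject whenever $V$ outputs $\bot$}: this forces Merlin to actually exhibit the unique canonical output, so that completeness and soundness of both directions follow immediately from the two $\psd\MA$ guarantees restricted to the promise $x\in L_R$. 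The deterministic search algorithm then simply queries $(x,1),\dots,(x,m(|x|))$ and concatenates the returned bits to recover $s(x)$; on $x\in L_R$ every query lies in the promise, as required.

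\textbf{Main obstacle.} I expect the technical heart of the argument to be the bookkeeping in the first direction: ensuring that the \emph{first} point at which Merlin deviates corresponds to a query genuinely inside the promise, so that the $\MA/\coMA$ soundness guarantees actually apply — this is precisely why it is essential both that $A$'s queries are required to be in the promise on inputs $x\in L_R$ and that Arthur reconstructs the queries adaptively from the claimed answers rather than trusting a claimed query list. The second delicate point, shared by both directions, is handling $x\notin L_R$ and out-of-promise queries: it is exactly the ``reject whenever $V$ outputs $\bot$'' convention in the oracle protocols, together with the smart-reduction convention on $A$'s behavior off the promise, that makes the asymmetric requirements (existence of a good prover versus soundness against all provers) line up on both sides. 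Finally, the same blueprint, with the appropriate substitutions, yields the stated analogues for $\psd\AM$ (using promise-$\AM\cap\coAM$ sub-protocols in place of the $\MA$ ones) and for $\psd\NP$ (using $\NP\cap\coNP$-style certificates); and for total search problems the oracle languages become total $\MA\cap\coMA$ (resp.\ $\AM\cap\coAM$, $\NP\cap\coNP$) languages, so no promise is needed.
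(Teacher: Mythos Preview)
Your proposal is correct and follows essentially the same two-inclusion strategy as the paper: Merlin supplies answers and $\MA/\coMA$ sub-proofs for each oracle call in one direction, and bit-by-bit extraction of $s(x)$ via a promise-$(\MA\cap\coMA)$ oracle in the other. Your write-up is in fact more careful than the paper's sketch---you make explicit the adaptive reconstruction of queries, the first-deviation soundness argument, the ``reject on $\bot$'' convention in the oracle protocols, and the handling of $x\notin L_R$---but the underlying ideas are the same.
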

\begin{proof}
The proof is similar to the proofs in \cite{GGR} and \cite{matching} which show similar reductions to decision problems in the context of pseudo-deterministic polynomial time algorithms and pseudo-deterministic NC algorithms.

First, we show that a polynomial time algorithm with an oracle for $\mathrm{promise-}(\MA \cap \coMA)$ decision problems which only asks queries in the promise has a corresponding pseudo-deterministic $\MA$ algorithm. Consider a polynomial time algorithm $A$ which uses an oracle for $\mathrm{promise-}(\MA \cap \coMA)$. We can simulate $A$ by an $\MA$ protocol where the prover sends the verifier the proof for every question which $A$ asks the oracle. Then, the verifier can simply run the algorithm from $A$, and whenever he accesses the oracle, he instead verifies the proof sent to him by the prover.

We note that the condition of a smart reduction is required in order for the prover to be able to send to the verifier the list of all queries $A$ will make to the oracle. If $A$ can ask the oracle queries not in the promise, it may be that on different executions of $A$, different queries will be made to the oracle (since $A$ is a adaptive, and the queries $A$ makes may depend on the answers returned by the oracle for queries not in the promise), so the prover is unable to predict what queries $A$ will need answered.

We now show that a pseudo-deterministic $\MA$ algorithm $B$ has a corresponding polynomial time algorithm $A$ that uses a $\mathrm{promise-}(\MA \cap \coMA)$ oracle while only querying on inputs in the promise. On input $x \in L_R$, the polynomial time algorithm can ask the $\mathrm{promise-}(\MA \cap \coMA)$ oracle for the first bit of the unique answer given by $B$. This is a decision problem in $\mathrm{promise-}(\MA \cap \coMA)$ since it has a constant round interactive proof (namely, run $B$ and then output the first bit). Similarly, the algorithm $A$ can figure out every other bit of the unique answer, and then concatenate those bits to obtain the full output.

Note that it is required that the oracle is for $\mathrm{promise-}(\MA \cap \coMA)$, and not just for $\mathrm{promise-}\MA$, since if one of the bits of the output is 0, the verifier must be able to convince the prover of that (and this would require a $\mathrm{promise-}\coMA$ protocol).
\end{proof}

A very similar proof shows the following: 
\begin{theorem}
The class $\psd\NP$ equals the class $\mathrm{search-}\mathrm{P}^{\mathrm{promise-}(\NP \cap \coNP)}$, where on any input $x \in L_R$, all queries to the oracle are in the promise.
\end{theorem}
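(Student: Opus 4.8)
The plan is to mirror the proof of Theorem~\ref{MAstructure} almost verbatim, replacing the role of $\mathrm{promise-}(\MA \cap \coMA)$ by $\mathrm{promise-}(\NP \cap \coNP)$ and the role of an $\MA$ verifier by a deterministic $\NP$ verifier. Recall from the definition that a $\psd\NP$ algorithm is a deterministic polynomial-time verifier $V$ for which, on $x \in L_R$, there is a canonical $s(x)$ with $(x, s(x)) \in R$, some witness $M$ with $V(x,M) = s(x)$, and every $M'$ gives $V(x,M') \in \{s(x), \bot\}$; and on $x \notin L_R$ every $M'$ gives $\bot$.

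First I would show the easy containment $\mathrm{search-}\mathrm{P}^{\mathrm{promise-}(\NP \cap \coNP)} \subseteq \psd\NP$. Take a polynomial-time search algorithm $A$ with a smart oracle for a promise-$(\NP \cap \coNP)$ problem. Build a $\psd\NP$ verifier: since the reduction is smart, on $x \in L_R$ all queries $A$ makes lie in the promise, so the (nondeterministic) proof $M$ can contain, query by query, either an $\NP$-witness (when the true answer is YES) or a $\coNP$-witness (when the true answer is NO) for each oracle call; $V$ deterministically simulates $A$, and at each oracle call checks the supplied witness and feeds $A$ the corresponding bit. If any supplied witness fails to verify, $V$ outputs $\bot$. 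Because the promise holds, exactly one of the two witness types exists for each query, so the honest proof reconstructs the true oracle answers and hence the canonical output $A$ would produce; and a dishonest proof can only cause some witness check to fail (outputting $\bot$) — it can never flip a YES to a NO answer or vice versa, since a false claim has no valid witness of either type, so $V$ never outputs a wrong answer. The smartness hypothesis is exactly what lets this work: without it, a dishonest answer to an out-of-promise query could send $A$ down a different adaptive query path, and the proof could not be pre-committed.

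Next I would show $\psd\NP \subseteq \mathrm{search-}\mathrm{P}^{\mathrm{promise-}(\NP \cap \coNP)}$. Given a $\psd\NP$ verifier $B$ with canonical output $s(x)$ of length $\ell = \ell(|x|)$, define for each $i$ the decision problem ``the $i$-th bit of $s(x)$ is $1$.'' This is in $\mathrm{promise-}\NP$: guess $M$, run $B(x,M)$, accept iff $B$ outputs a string whose $i$-th bit is $1$ (and, to keep the problem well-behaved, only among $M$ with $B(x,M) \neq \bot$). It is in $\mathrm{promise-}\coNP$ symmetrically: the complement problem (``$i$-th bit is $0$'') has the same form. The promise is precisely $x \in L_R$, on which $s(x)$ is well-defined and pseudo-determinism guarantees every accepting $M$ yields the same $s(x)$, so the YES and NO instances are genuinely disjoint and the problem is in $\mathrm{promise-}(\NP \cap \coNP)$. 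The polynomial-time search algorithm $A$ simply queries this oracle for $i = 1, \dots, \ell$ and concatenates the answers to recover $s(x)$; all its queries are in the promise whenever $x \in L_R$. One also needs $A$ to detect $x \notin L_R$ and output $\bot$: this can be folded in by first querying whether $\ell > 0$ / whether any accepting $M$ exists, which is again $\mathrm{promise}$-friendly on the natural promise, or by noting that the containment is stated for the promise class so behavior off-promise is unconstrained.

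I expect the main (minor) obstacle to be bookkeeping around the promise and the $\bot$ symbol: one must be careful that the bit-extraction decision problems are \emph{simultaneously} in promise-$\NP$ and promise-$\coNP$ with the \emph{same} promise set $L_R$, and that the honest-prover/dishonest-prover guarantees of $\psd\NP$ translate into the disjointness of YES and NO instances — this is where pseudo-determinism (not mere nondeterminism) is essential, exactly as the remark after Theorem~\ref{MAstructure} notes for $\MA$. The total-problem variant $\psd\NP = \mathrm{search-}\mathrm{P}^{\NP \cap \coNP}$ follows by the same argument, observing that when $R$ is total the promise is all strings and the oracle problems become ordinary (total) $\NP \cap \coNP$ languages.
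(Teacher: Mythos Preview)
Your proposal is correct and follows essentially the same approach as the paper: the paper does not even spell out a separate proof for the $\NP$ case, stating only that ``a very similar proof'' to Theorem~\ref{MAstructure} works, and your write-up is exactly that adaptation (bit-by-bit oracle extraction for one containment, witness-list in the nondeterministic proof for the other, with smartness ensuring the query sequence is well-defined). Your additional care about the $\bot$ symbol and the promise bookkeeping is, if anything, more thorough than the paper's treatment.
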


We now prove a similar theorem for the case of $\AM$ protocols. We note that this is slightly more subtle, since it's not clear how to simulate a $\mathrm{search-}\mathrm{P}^{\mathrm{promise-}(\AM \cap \coAM)}$ protocol using only a constant number of rounds of interaction, since the search-P algorithm may ask polynomial many queries in an adaptive fashion.
\
\begin{theorem}\label{AMstructure}
The class $\psd\AM$ equals the class $\mathrm{search-}\mathrm{P}^{\mathrm{promise-}(\AM \cap \coAM)}$, where on any input $x \in L_R$, the all queries to the oracle are in the promise.
\end{theorem}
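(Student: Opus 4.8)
The plan is to prove the two inclusions separately. One of them essentially repeats the argument of Theorem~\ref{MAstructure}; the other contains the one genuinely new point, namely collapsing the polynomially many \emph{adaptive} oracle queries of a search-$\mathrm{P}$ algorithm into a constant number of Arthur--Merlin rounds.

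\textbf{$\psd\AM\subseteq\mathrm{search-}\mathrm{P}^{\mathrm{promise-}(\AM\cap\coAM)}$.} This mirrors the $\psd\MA$ case. Given a $\psd\AM$ protocol with verifier $V$ and canonical answer $s(x)$, amplify so that on $x\in L_R$ the honest prover forces $V$ to output $s(x)$, and every prover forces $V\in\{s(x),\bot\}$, except with exponentially small probability. Since $|s(x)|$ is polynomially bounded, the search-$\mathrm{P}$ algorithm binary-searches for $\ell=|s(x)|$ and then reads off $s(x)_1,\dots,s(x)_\ell$, one oracle call each. The decision problem ``$s(x)_i=1$'' has promise ``$x\in L_R$'', and on that promise it lies in $\mathrm{promise-}(\AM\cap\coAM)$: the $\AM$ protocol for the ``$1$'' side has Arthur send the verifier's coins $r$, Merlin reply with the prover message $z$, and Arthur accept iff $V(x,r,z)\neq\bot$ and $(V(x,r,z))_i=1$; accepting instead iff $V(x,r,z)\neq\bot$ and $(V(x,r,z))_i=0$ gives the $\coAM$ side. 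Completeness follows from condition~1 of the $\psd\AM$ definition and soundness from condition~2 (a prover who forces $\bot$ merely causes a reject, which is harmless here). Concatenating the bits yields $s(x)$; the queries are in the promise exactly when $x\in L_R$, and the case $x\notin L_R$ is handled as in Theorem~\ref{MAstructure}.

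\textbf{$\mathrm{search-}\mathrm{P}^{\mathrm{promise-}(\AM\cap\coAM)}\subseteq\psd\AM$.} Let $A$ be a polynomial-time oracle algorithm making at most $m=m(|x|)$ adaptive queries to a $\mathrm{promise-}(\AM\cap\coAM)$ problem $\Pi$, all in the promise whenever $x\in L_R$. Answering the queries one at a time costs $\Theta(m)$ rounds, so instead we exploit that the honest query sequence is a fixed function of $x$ (it does not depend on Arthur's coins). Arthur sends, in one message, independent coin strings $r^{(1)},\dots,r^{(m)}$ for $m$ copies of the subprotocol machinery. Merlin (unbounded, hence knowing the true answers and the optimal subprotocol replies) responds with the entire claimed transcript $(q_1',a_1'),\dots,(q_m',a_m')$ together with, for each $j$, a response $z^{(j)}$ purporting to prove ``$q_j'$ has answer $a_j'$'' --- an $\AM$ statement about $\Pi$ when $a_j'$ is YES, an $\AM$ statement about the complement when $a_j'$ is NO. Arthur accepts iff (i) the sequence $(q_j')$ is exactly what $A$ queries when fed the answers $(a_j')$, and (ii) every $z^{(j)}$ verifies; in that case Arthur outputs whatever $A$ outputs on that transcript, and $\bot$ otherwise. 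Amplifying each subprotocol to error below $1/(3m)$ before composing in parallel keeps the whole thing sound up to a union bound (public-coin $\AM$ composes in parallel while staying $2$-round), so this is already a $2$-round public-coin protocol, i.e.\ a $\psd\AM$ protocol.

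\textbf{Correctness, and the main obstacle.} If $x\in L_R$ and Merlin is honest, Arthur reconstructs the true run of $A$ and outputs $A^\Pi(x)=s(x)$; if Merlin deviates, let $j$ be the first index where $a_j'$ disagrees with the true answer of $q_j'$. All earlier claimed answers being truthful, $q_j'$ equals $A$'s genuine $j$-th query, which --- because $x\in L_R$ --- is in the promise, so ``$q_j'$ has answer $a_j'$'' is a \emph{false} in-promise statement and the $j$-th subprotocol rejects with probability $\geq 1-1/(3m)$. Hence Arthur outputs $s(x)$ or $\bot$, as required. For $x\notin L_R$, $A$ outputs $\bot$ on every answer sequence, so Arthur outputs $\bot$ regardless of Merlin. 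The point that needs care is precisely this soundness argument: an adaptive cheating prover might try to lie early in order to drive $A$ toward queries \emph{outside} the promise, where the subprotocols carry no soundness guarantee. The resolution is that the \emph{first} lie is unavoidably about an in-promise query --- up to that point the simulated execution of $A$ coincides with the honest one, and (since the honest query sequence is coin-independent) this remains true after Arthur has committed his coins, so nothing past the first lie is ever examined. With that in hand, the remaining bookkeeping (amplify-then-parallelize, union bound over the $m$ subprotocols) is routine.
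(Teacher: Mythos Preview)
Your proof is correct and follows essentially the same approach as the paper: bit-by-bit recovery of $s(x)$ via oracle calls for the inclusion $\psd\AM\subseteq\mathrm{search-}\mathrm{P}^{\mathrm{promise-}(\AM\cap\coAM)}$, and for the reverse inclusion having Merlin commit to the full query/answer transcript with all answers proved in parallel, soundness coming from the observation that the \emph{first} lie is necessarily about an in-promise query. The only cosmetic difference is round structure: the paper has Merlin send the query list before Arthur's coins (an MAM protocol, then implicitly collapsed to AM via \cite{babai}), whereas you send Arthur's coins first and handle the resulting adaptivity by a union bound over the $m$ possible first-lie positions---both are standard and lead to the same result.
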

\begin{proof}
First, we show that a polynomial time algorithm with an oracle for $\mathrm{promise-}(\AM \cap \coAM)$ decision problems where the queries are all in the promise has a corresponding pseudo-deterministic $\AM$ algorithm. We proceed similarly to the proof of Theorem \ref{MAstructure}. Consider a polynomial time algorithm $A$ which uses an oracle for $\mathrm{promise-}(\AM \cap \coAM)$. The prover will internally simulate that algorithm $A$, and then send to the verifier a list of all queries that $A$ makes to the $\mathrm{promise-}(\AM \cap \coAM)$ oracle. Then, the prover can prove the answer (in parallel), to all of those queries.

To prove correctness, suppose that the prover lies about at least one of the oracle queries. Then, consider the first oracle query to which the prover lied. Then, by a standard simulation argument, one can show that it can be made overwhelmingly likely that the verifier will discover that the prover lied on that query.

Once all queries have been answered by the verifier the algorithm $B$ can run like $A$, but instead of querying the oracle, it already knows the answer since the prover has proved it to him.

The proof that a pseudo-deterministic $\MA$ algorithm $B$ has a corresponding polynomial time algorithm $A$ that uses an $\mathrm{promise-}(\AM \cap \coAM)$ oracle is identical to the proof of Theorem \ref{MAstructure}
\end{proof}

As a corollary of the above, we learn that private coins are no more powerful than public coins in the pseudo-deterministic setting:

\begin{corollary}
A pseudo-deterministic constant round interactive proof using private coins can be simulated by a pseudo-deterministic constant round interactive proof using public coins.
\end{corollary}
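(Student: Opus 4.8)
The plan is to deduce this from the structural characterization of $\psd\AM$ in Theorem~\ref{AMstructure} together with the Goldwasser--Sipser private-to-public coin transformation \cite{GS}. Let $R$ be a search problem with a pseudo-deterministic constant-round \emph{private}-coin interactive proof, and write $s(x)$ for its canonical output on $x\in L_R$. First I would run the bit-extraction argument used in the proofs of Theorems~\ref{MAstructure} and~\ref{AMstructure}: for each index $i$, consider the decision problem ``the $i$-th bit of $s(x)$ is $1$''. This problem has a constant-round private-coin interactive proof --- simulate the given pseudo-deterministic protocol and accept iff the verifier's output is a string whose $i$-th bit is $1$; completeness is witnessed by the honest prover, and the no-instance guarantee follows because no prover can make the verifier output anything other than $s(x)$ or $\bot$ except with small probability. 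The complementary problem ``the $i$-th bit of $s(x)$ is $0$'' has a symmetric constant-round private-coin proof, so the bit-extraction problem lies in $\mathrm{promise-}(\IP[O(1)]\cap\coIP[O(1)])$, where $\IP[O(1)]$ denotes constant-round private-coin interactive proofs; a polynomial-time machine that queries these problems (only in the promise, i.e.\ only when $x\in L_R$) reconstructs $s(x)$.

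The second step is to remove the private coins. Because the prover is computationally unbounded throughout, the Goldwasser--Sipser theorem \cite{GS} (together with \cite{babai} to collapse the round count) converts any constant-round private-coin interactive proof into a public-coin $\AM$ protocol with comparable error, so $\IP[O(1)]=\AM$ and $\coIP[O(1)]=\coAM$ as classes of promise problems. Plugging this into the conclusion of the first step gives $R\in\mathrm{search-}\mathrm{P}^{\mathrm{promise-}(\AM\cap\coAM)}$, which by Theorem~\ref{AMstructure} equals $\psd\AM$. Thus $R$ has a pseudo-deterministic constant-round public-coin interactive proof, as desired.

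The part that needs the most care is checking that the Goldwasser--Sipser simulation meshes with the promise and pseudo-determinism structure: one has to confirm that it preserves a constant number of rounds (it does, with only a constant-factor blow-up before the round collapse of \cite{babai}), that the error can be pushed below whatever threshold the bit-extraction argument needs (routine, by amplifying the given protocol before applying \cite{GS}), and that it is only ever applied to legitimate promise problems --- for $x\in L_R$ the yes- and no-instances of ``$i$-th bit is $1$'' are disjoint and exhaustive, so this is fine. The only remaining loose end is the requirement that on $x\notin L_R$ the resulting protocol output $\bot$ with high probability; this is handled exactly as in the proofs of Theorems~\ref{MAstructure} and~\ref{AMstructure}, by first having the verifier confirm that $x$ is accepted at all before attempting any bit-extraction.
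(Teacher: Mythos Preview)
Your proposal is correct and follows essentially the same approach as the paper: both argue that a private-coin pseudo-deterministic protocol lands in $\mathrm{search\text{-}P}^{\mathrm{promise\text{-}}(\IP[O(1)]\cap\coIP[O(1)])}$ via the bit-extraction argument of Theorems~\ref{MAstructure} and~\ref{AMstructure}, then invoke Goldwasser--Sipser to identify this oracle class with $\mathrm{promise\text{-}}(\AM\cap\coAM)$ and conclude via Theorem~\ref{AMstructure}. The paper phrases this as the equality chain $\psd\AM = \mathrm{search\text{-}P}^{\mathrm{promise\text{-}}(\AM\cap\coAM)} = \mathrm{search\text{-}P}^{\mathrm{promise\text{-}}(\IP\cap\coIP)} = \psd\IP$, but the content is identical to what you wrote.
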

\begin{proof}
By Theorem \ref{AMstructure}, we can view the algorithm as an algorithm in $\mathrm{search-}\mathrm{P}^{\AM \cap \coAM}$.

By a similar argument to that in Theorem \ref{AMstructure}, one can show that $\psd\IP = \mathrm{search-}\mathrm{P}^{\IP \cap \coIP}$, where in this context $\IP$ refers to \textit{constant} round interactive proofs using \textit{private coins}, and $\AM$ refers to constant round interactive proofs using \textit{public coins}. Since $\mathrm{promise-}(\AM \cap \coAM) = \mathrm{promise-}(\IP \cap \coIP)$, since every constant round \textit{private coin} interactive proof for decision problems can be simulated by a constant round interactive proof using \textit{public coins} \cite{GS}, we have:

$$\psd\AM = \mathrm{search-}\mathrm{P}^{\mathrm{promise-}(\AM \cap \coAM)} = \mathrm{search-}\mathrm{P}^{\mathrm{promise-}(\IP \cap \coIP)} = \psd\IP.$$
\end{proof}

\section{Discussion and Open Problems}

\paragraph{Pseudo-determinism and TFNP:} The class of total search problems solvable by pseudo-deterministic $\NP$ algorithms is a very natural subset of TFNP, the set of all total $\NP$ search problems. It is interesting to understand how the set of total $\psd \NP$ problems fits in TFNP. For example, it is not known whether $\mathrm{TFNP} = \psd \NP$. It would be interesting either to show that every problem in TFNP has a pseudo-deterministic $\NP$ algorithm, or to show that under plausible assumptions there is a problem in TFNP which does not have a pseudo-deterministic $\NP$ algorithm.

Similarly, it is interesting to understand the relationship of $\psd \NP$ to other subclasses of TFNP. For example, one can ask whether every problem in PPAD has a pseudo-deterministic $\NP$ algorithm (i.e., given a game, does there exists a pseudo-deterministic $\NP$ or $\AM$ algorithm which outputs a Nash Equilibrium), or whether under plausible assumptions this is not the case. Similar questions can be asked for CLS, PPP, and so on.

\paragraph{Pseudo-determinism in Lattice problems:} There are several problems in the context of lattices which have $\NP$ (and often also $\NP \cap \coNP$) algorithms \cite{latticeNPcoNP}. Notable examples include gap-SVP and gap-CVP, for certain gap sizes. It would be interesting to show pseudo-deterministic interactive proofs for those problems. In other words, one could ask: does there exists an $\AM$ protocol for gap-SVP so that when a short vector exists, the \textit{same} short vector is output every time. Perhaps more interesting would be to show, under plausible cryptographic assumptions, that certain such problems \textit{do not} have $\psd\AM$ protocols.

\paragraph{Pseudo-determinism and Number Theoretic Problems:} The problem of generating primes (given a number $n$, output a prime greater than $n$), and the problem of finding primitive roots (given a prime $p$, find a primitive root mod $p$) have efficient randomized algorithms, and have been studied in the context of pseudo-determinism \cite{roots, GG11, OS}, though no polynomial time pseudo-deterministic algorithms have been found. It is interesting to ask whether these problems have polynomial time $\psd\AM$ protocols.

\paragraph{The Relationship between $\psd\AM$ and $\mathrm{search-}\BPP$:} One of the main open problems in pseudo-determinism is to determine whether every problem in $\mathrm{search-}\BPP$ also has a polynomial time pseudo-deterministic algorithm. This remains unsolved. As a step in that direction (and as an interesting problem on its own), it is interesting to determine whether $\mathrm{search-}\BPP \subseteq \psd\AM$. In this paper, we proved a partial result in this direction, namely that $\mathrm{search-}\BPP \subseteq i.o. \psd\MA_{\mathrm{SUBEXP}}$.

\paragraph{Zero Knowledge Proofs of Uniqueness:} The definition of pseudo-deterministic interactive proofs can be extended to the context of Zero Knowledge. In other words, the verifier gets no information other than the answer, and knowing that it is the unique/canonical answer. It is interesting to examine this notion and understand its relationship to $\psd\AM$.

\paragraph{The Power of the Prover in pseudo-deterministic interactive proofs:} Consider a search problem which can be solved in $\IP$ where the prover, instead of being all-powerful, is computationally limited. We know that such a problem can be solved in $\psd\IP$ if the prover has unlimited computational power (in fact, one can show it is enough for the prover to be in PSPACE). In general, if the prover can be computationally limited for some $\IP$ protocol, can it also be computationally limited for a $\psd\IP$ protocol for the same problem? It is also interesting in general to compare the power needed for the $\psd\IP$ protocol compared to the power needed to solve the search problem non-pseudo-deterministically. Similar questions can be asked in the context of $\AM$.

\paragraph{The Power of the Prover in pseudo-deterministic private vs public coins proofs:} In our $\psd\AM$ protocol for Graph Isomorphism, the verifier uses private coins, and the prover is weak (it can be simulated by a polynomial time machine with an oracle for graph isomorphism). If using public coins, what power would the prover need? In general, it is interesting to compare the power needed by the prover when using private coins vs public coins in $\psd\AM$ and $\psd\IP$ protocols. 

\paragraph{Pseudo-deterministic interactive proofs for setting cryptographic global system parameters:} Suppose an authority must come up with global parameters for a cryptographic protocol (for instance, a prime $p$ and a primitive root $g$ of $p$, which would be needed for a Diffie-Hellman key exchange). It may be important that other parties in the protocol know that the authority did not come up with these parameters because he happens to have a trapdoor to them. If the authority proves to the other parties that the parameters chosen are canonical, the other parties now know that the authority did not just pick these parameters because of a trapdoor (instead, the authority had to pick those parameters, since those are the canonical ones). It would be interesting to come up with a specific example of a protocol along with global parameters for which there is a pseudo-deterministic interactive proof showing the parameters are unique.

\bibliographystyle{plain}
\bibliography{bibfile}

\begin{appendix}
\begin{section}{Alternate Algorithm for Graph Isomorphism in pseudo-deterministic $\AM$}
In this section, we present another $\psd \AM$ algorithm for Graph Isomorphism, this one more combinatorial (as opposed to the more group theoretic approach of the algorithm in Section \ref{sec:GI}).

\begin{proof}
Let the vertices of $G_1$ be $v_1, v_2, \ldots, v_n$, and the vertices of $G_2$ be $u_1, u_2, \ldots, u_n$. We will show an $\AM$ algorithm which outputs a unique isomorphism $\phi$. Our algorithm will proceed in $n$ stages (which we will later show can be parallelized). After the $k$th stage, the values $\phi(v_1), \phi(v_2), \ldots, \phi(v_k)$ will be determined. 

Suppose that the values $\phi(v_1), \phi(v_2), \ldots, \phi(v_k)$ have been determined. Then we will determine the smallest $r$ such that there exists an isomorphism $\phi^*$ such that for $1 \le i \le k$, we have  $\phi^*(v_i) = \phi(v_i)$, and in addition, $\phi^*(v_{k+1}) = u_r$. If we find $r$, we can set $\phi(v_{k+1}) = \phi^*(v_{k+1})$ and continue to the $k+1^{th}$ stage.

To find the correct value of $r$, the (honest) prover will tell the verifier the value of $r$ and $\phi$. Then, to show that the prover is not lying, for each $r' < r$, the prover will prove that there exists no isomorphism $\phi'$ such that for $1 \le i \le k$, we have  $\phi'(v_i) = \phi(v_i)$, and in addition, $\phi'(v_{k+1}) = u_{r'}$. To prove this, the verifier will pick $G_1$ or $G_2$, each with probability $1/2$. If the verifier picked $G_1$, he will randomly shuffle the vertices $v_{k+2}, \ldots, v_{n}$, and send the shuffled graph to the prover. If the verifier picked $G_2$, he will set $u'_i = \phi(v_i)$ for $1 \le i \le k$, and $u'_{k+1} = u_{r'}$, and shuffle the rest of the vertices. If the prover can distinguish between whether the verifier initially picked $G_1$ or $G_2$, then that implies there is no isomorphism sending $v_i$ to $\phi(v_i)$ for $1 \le i \le k$, and sending $v_{k+1}$ to $u_{r'}$. The prover now can show this for all $r' \le r$ (in parallel), as well as exhibit the isomorphism $\phi$, thus proving that $r$ is the minimum value such that there is an isomorphism sending $v_i$ to $\phi(v_i)$ for $1 \le i \le k$, and sending $v_{k+1}$ to $u_{r}$.

We now show that the above $n$ stages can be done in parallel in order to achieve a constant round protocol. To do so, in the first stage, the prover sends the isomorphism $\phi$ to the verifier. Then, the verifier can test (in parallel) for each $k$ whether under the assumption that $\phi(v_1), \phi(v_2), \ldots, \phi(v_k)$ are correct, $\phi(v_{k+1})$ is the lexicographically minimal vertex which $v_{k+1}$ can be sent to. We now show that using this parallelized protocol, the prover cannot cheat. To show this, suppose that the prover sent some $\phi' \neq \phi$. Then, consider the smallest $i$ for which $\phi(v_i) \neq \phi'(v_i)$. The prover will have to prove that $\phi'(v_i)$ is the lexicographically minimal vertex which $v_i$ can be mapped to, given $\phi(v_1), \phi(v_2), \ldots, \phi(v_{i-1})$. We note that by a standard simulation argument, because the questions asked by the verifier in other stages (which are now in parallel) can be simulated by the prover, the above $n$ stages can be done in parallel while maintaining the low failure probability, so the protocol can be adapted so it requires only a constant number of rounds.
\end{proof}
\end{section}
\end{appendix}
\end{document}